\documentclass[conference,a4paper]{IEEEtran}
\addtolength{\topmargin}{9mm}
\usepackage{amsfonts,amssymb,amstext,url}
\usepackage{amsthm}
\usepackage{color,dsfont,enumerate,hyperref,makeidx,multicol,verbatim,xfrac,xr}
\usepackage[cmex10]{amsmath}
\usepackage{pict2e}
\usepackage[mathscr]{euscript}
\usepackage{shadethm}
\usepackage[geometry]{ifsym}
\usepackage[utf8]{inputenc} 
\usepackage[T1]{fontenc}
\usepackage{url}
\usepackage{ifthen}
\usepackage{cite}
\usepackage[cmex10]{amsmath} 
\hypersetup{
	pdftitle		={Refined Strong Converse for the Constant Composition Codes},
	pdfauthor		={Baris Nakiboglu and Hao-Chung Cheng},
	pdfsubject		={},
	pdfkeywords		={}, 
}
\interdisplaylinepenalty=2500 
\hypersetup{hidelinks}

\externaldocument[C-]{main-C}

\theoremstyle{plain}
\newtheorem{lemma}{Lemma} 
\newtheorem{theorem}{Theorem}

\newtheorem*{conjecture*}{Conjecture}

\theoremstyle{definition}
\newtheorem{definition}{Definition}

\newtheorem{remark}{Remark}
\newtheorem*{remark*}{Remark}
\definecolor{mygray}{gray}{0.4}
\newcommand{\set} [1]			{{\mathscr{{#1}}}}
\newcommand{\alg}[1]			{{\mathcal{{#1}}}}
\newcommand{\rndv}[1]      {{\mathsf{{#1}}}}

\newcommand{\msr}[1]       {{\it    {{#1}}}}
\newcommand{\cnst}[1]      {{\mathit{{#1}}}}


\newcommand{\integers}[1]	{{\mathbb{Z}}_{^{{#1}}}}

\newcommand{\reals}[1]		{{\mathbb{R}}_{^{{#1}}}}


\newcommand{\dif}[1]       {{\mathrm{d}{#1}}}  
\newcommand{\der}[2]        {\tfrac{\dif{#1}}{\dif{#2}}}  
\newcommand{\pder}[2]       {\tfrac{\partial{#1}}{\partial{#2}}}  
\newcommand{\supp}[1]       {\mathtt{supp}({{#1}})}       

\newcommand{\DEF}[0]			{{\!\!~\triangleq\!~}}  
\newcommand{\mtimes}[0]			{{\circledast}}

\newcommand{\AC}[0]            {{\prec}}

\newcommand{\abs}[1]           {{\left\lvert{{#1}}\right\lvert}}
\newcommand{\abp}[1]           {{\left\lvert{{#1}}\right\lvert^{+}}}
\newcommand{\lon}[1]           {{{\left\lVert{{#1}}\right\lVert}}} 

\newcommand{\IND}[1]           {{\mathds{1}_{\{#1\}}}}

\newcommand{\knd}[0]           {{\kappa}}
\newcommand{\tin}[0]           {{\cnst{t}}}
\newcommand{\blx}[0]           {{\cnst{n}}}


\newcommand{\EXS}[2]         {{\bf E}_{{#1}}\!\left[{#2}\right]}
\newcommand{\VXS}[2]         {{\bf V}_{{#1}}\!\left[{#2}\right]}



\newcommand{\fX}[0]          {{\cnst{f}}}


\newcommand{\RD}[3]				
{{\cnst{D}}_{{#1}}            \!\left(\left.            \! {#2}\right\Vert {#3}                  \right)}
\newcommand{\CRD}[4]			
{{\cnst{D}}_{{#1}}            \!\left(\left.\!\left.    \! {#2}\right\Vert {#3} \right\vert{{#4}}\right)}

\newcommand{\RMI}[3]	{{\cnst{I}}_{{#1}}        \!\left(        \! {#2};  \!{#3}                \!\right)}

\newcommand{\htdelta}[0] {{\cnst{\varDelta}}}
\newcommand{\composition}[0]	{{\cnst{\varUpsilon}}}

\newcommand{\rfm}[0]			{{{\msr{\nu}}}}

\newcommand{\dsce}[1]			{{\cnst{E}_{sc\!}'}       \left({#1}\right)}

\newcommand{\sce}[1]			{{\cnst{E}_{sc\!}}       \left({#1}\right)}
\newcommand{\rate}[0]			{{\cnst{R}}}
\newcommand{\GCD}[1]			{{{{\cnst{\Phi}}}\left({{#1}}\right)}}

\newcommand{\cln}[1]          {{{\xi}_{{#1}}}}


\newcommand{\rnb}[0]          {{\cnst{\beta}}}

\newcommand{\rno}[0]          {{\cnst{\alpha}}}

\newcommand{\rns}[0]          {\rno^{\!\ast}}

\newcommand{\Pem}[1]           {{\it P_{{{\bf e}}}^{{#1}}}}         
 
\newcommand{\enc}[0]           {{\varPsi}} 
\newcommand{\dec}[0]           {{\varTheta}}    



\newcommand{\oev}[0]           {{\set{E}}}


\newcommand{\pmea}[1]          {{{\alg{P}}({#1})}}

\newcommand{\pdis}[1]          {{{\set{P}}({#1})}}

\newcommand{\dinp}[0]          {{\cnst{x}}}

\newcommand{\inpS}[0]          {{\set{X}}}

\newcommand{\dout}[0]          {{\cnst{y}}}
\newcommand{\out}[0]           {{\rndv{Y}}}
\newcommand{\outS}[0]          {{\set{Y}}}
\newcommand{\outA}[0]          {{\alg{Y}}}

\newcommand{\dsta}[0]          {{\cnst{z}}}

\newcommand{\dmes}[0]          {{\cnst{m}}}

\newcommand{\mesS}[0]          {{\set{M}}}


\newcommand{\estS}[0]          {{\widehat{{\set{M}}}}}


\newcommand{\mean}[0]        {{{\msr{\mu}}}}

\newcommand{\mA}[0]				{{\msr{a}}}    
\newcommand{\amn}[1]			{{{\mA}_{{#1}}}}

\newcommand{\mP}[0]				{{\msr{p}}}

\newcommand{\mQ}[0]				{{\msr{q}}}    
\newcommand{\qmn}[1]			{{{\mQ}_{{#1}}}}

\newcommand{\mS}[0]				{{\msr{s}}}

\newcommand{\mV}[0]				{{\msr{v}}}    
\newcommand{\vmn}[1]			{{{\mV}_{{#1}}}}

\newcommand{\Vm}[0]				{{{\cnst{V}}}}

\newcommand{\mW}[0]				{{\msr{w}}}    
\newcommand{\wmn}[1]			{{{\mW}_{{#1}}}}
\newcommand{\wma}[2]			{{{\mW}_{{#1}}^{{#2}}}}
\newcommand{\Wm}[0]				{{{\cnst{W}}}}
\newcommand{\Wmn}[1]			{{{\cnst{W}}_{{#1}}}}
\newcommand{\Wma}[2]			{{{\cnst{W}}_{{#1}}^{{#2}}}}



\newcommand{\csiszar}[0]							{Csisz\'{a}r~}

\newcommand{\korner}[0]								{K\"{o}rner~}
\newcommand{\renyi}[0]								{R\'{e}nyi~}

\makeatletter
\DeclareRobustCommand{\bigplus}{%
	\mathop{\vphantom{\sum}\mathpalette\@bigplus\relax}\slimits@
}
\newcommand{\@bigplus}[2]{\vcenter{\hbox{\make@bigplus{#1}}}}
\newcommand{\make@bigplus}[1]{%
	\sbox\z@{$\m@th#1\sum$}%
	\setlength{\unitlength}{\wd\z@}%
	\begin{picture}(1.4,1.4)
	\linethickness{.17ex}
	\Line(.7,.14)(.7,1.26)
	\Line(.14,.7)(1.26,.7)
	\end{picture}%
}
\DeclareRobustCommand{\bigtimes}{%
	\mathop{\vphantom{\sum}\mathpalette\@bigtimes\relax}\slimits@
}
\newcommand{\@bigtimes}[2]{\vcenter{\hbox{\make@bigtimes{#1}}}}
\newcommand{\make@bigtimes}[1]{%
	\sbox\z@{$\m@th#1\sum$}%
	\setlength{\unitlength}{\wd\z@}%
	\begin{picture}(1,1)
	\linethickness{.17ex}
	\Line(.1,.1)(.9,.9)
	\Line(.1,.9)(.9,.1)
	\end{picture}%
}
\makeatother
\hyphenation{op-tical net-works semi-conduc-tor}
\IEEEoverridecommandlockouts 
\begin{document}
	\pagestyle{plain}	
	\pagenumbering{arabic}
\title{\vspace{-.4em}Refined Strong Converse for\\ the Constant Composition Codes\vspace{-.4em}}
\author{%
  \IEEEauthorblockN{Hao-Chung Cheng}
   \IEEEauthorblockA{Department of Applied Mathematics and Theoretical Physics\\
                    University of Cambridge\\
                    Cambridge   CB3  0WA,  U.K.\\
                    Email: haochung.ch@gmail.com\vspace{-2em}}
  \and
  \IEEEauthorblockN{Bar\i\c{s} Nakibo\u{g}lu}
   \IEEEauthorblockA{Department of Electrical and Electronics Engineering\\
                    Middle East Technical University (METU)\\ 
                    06800 Ankara, Turkey\\
                    Email:bnakib@metu.edu.tr\vspace{-2em}}
  \thanks{The work was done partially while the authors were visiting 
  	the Institute for Mathematical Sciences, National University of Singapore in 2017. 
  	The visit was supported by the Institute. This work is also supported by
  	the Ministry of Science and Technology, Taiwan (R.O.C.), 
  	under Grant 108-2917-I-564-042 and 109WXA0310019,
  	the Science Academy, Turkey, under The Science Academy's Young Scientist Awards 
  	Program (BAGEP),
  	and by the Scientific and Technological Research Council 
	of Turkey (T\"{U}B\.{I}TAK) under Grant 119E053.}
}
\maketitle 

\begin{abstract}
A strong converse bound for constant composition codes of the form 
\(\Pem{(\blx)}\geq 1- A \blx^{-0.5(1-\dsce{\rate,\Wm\!,\mP})} e^{-\blx\sce{\rate,\Wm\!,\mP}}\)
is established using the Berry-Esseen theorem through the concepts of 
Augustin information and Augustin mean,
where \(A\) is a constant determined by the channel \(\Wm\), 
the composition \(\mP\), and the rate \(\rate\), 
i.e., \(A\) does not depend on the block length \(\blx\). 
\end{abstract}

\section{Introduction}
On discrete stationary product channels, the error probability of 
codes operating at rates above capacity is not only bounded away 
from zero but also converging to one. This property, first observed 
by Wolfowitz \cite{wolfowitz57}, is called the strong converse property. 
For arbitrary stationary product channels, a necessary and sufficient condition 
for the strong converse property was determined by Augustin 
\cite[\S 10]{augustin66}, \cite[\S 13]{augustin78}. 
The strong converse property does not hold in general; nevertheless
there does exist a universal asymptotic constant that bounds the error probability 
of codes operating at rates above the capacity of the stationary product 
channels, according to Beck and Csiszar \cite{beckC78}.
In \cite{verduH94}, Verd\'{u} and Han 
provided a necessary and sufficient condition 
for the strong converse property for channels that are not necessarily 
stationary or memoryless.

Using the concept of \renyi capacity, which was employed earlier by Gallager \cite{gallager65}
for analyzing the error probability of codes operating at rates below the channel capacity, 
Arimoto established in \cite{arimoto73} the following lower bound to the error probability 
of codes on discrete stationary product channels (DPSCs) operating at a rate \(\rate\) above the channel capacity:
\begin{align}
\label{eq:arimoto}
\Pem{(\blx)}\geq 1- e^{-\blx\sce{\rate}}
\end{align}
where \(\sce{\cdot}\) is the strong converse exponent of the channel.
Although Arimoto's initial proof in \cite{arimoto73} is for DSPCs,
Arimoto's lower bound can be proved as a one shot bound 
for more general channel models using 
Jensen's inequality or H\"olders inequality
as noted by Augustin \cite{augustin78} and Sheverdyaev \cite{sheverdyaev82},
see also \cite{nakiboglu19B,polyanskiyV10,nagaoka01}.
Arimoto's lower bound is used to establish the strong converse on 
channels for which alternative derivations of the strong converse 
is much more tedious, e.g. quantum channels discussed in
\cite{nagaoka01,ogawaN99,konigW09,mosonyiH11,sharmaW13,wildeWY14,guptaW15,mosonyiO17,tomamichelWW17,CHDH-2018, CHDH2-2018}
and Poisson channels mentioned in \cite[Appendices B-B and B-C]{nakiboglu19B}.

Aritmoto's lower bound to the error probability has been derived for 
certain constrained codes on memoryless channels as well, 
see 
Dueck and \korner \cite{dueckK79} for the constant composition codes on DSPCs,
Oohama \cite{oohama17A} for the Gaussian channel,
and Cheng \emph{et al.} \cite{CHDH2-2018} and 
Mosonyi and Ogawa \cite{mosonyiO18} for the 
constant composition codes on classical-quantum channels.

For codes on DSPCs Omura \cite{omura75} has shown
\begin{align}
\label{eq:DK}
\lim\nolimits_{\blx \to \infty} -\tfrac{1}{\blx} \ln(1-\Pem{(\blx)})
&\!\leq\! \sce{\rate}
\end{align}
for all rates above the channel capacity and below a certain threshold.
Dueck and \korner \cite{dueckK79} established \eqref{eq:DK} for all 
rates above the channel capacity.
Thus Arimoto's bound is tight, in terms of the exponential decay rate of 
the probability of correct decoding with block length, 
for all rates above the channel capacity.
An analogous result is derived  
for constant composition codes on DSPCs in \cite{dueckK79},
for the Gaussian channel in \cite{oohama17A},
for classical-quantum channels in \cite{mosonyiO17},
for classical data compression with quantum side information in \cite{CHDH-2018},
and for constant composition codes on classical-quantum channels in \cite{mosonyiO18}.

Although Arimoto's bound, given in \eqref{eq:arimoto},
is tight in terms of the exponential 
decay rate of the correct decoding probability with block length,
the prefactor multiplying the exponentially decaying term can be improved.
In particular,
for the constant composition codes operating at rates larger than 
the mutual information of the composition.
Theorems \ref{thm:constantcomposition-RSC-lowrate}
and \ref{thm:constantcomposition-RSC-highrate}, 
in the following, establish a strong converse bound of the form
\begin{align}
\label{eq:RSC}
\Pem{(\blx)}
&\geq 1- A n^{-0.5(1-\dsce{\rate})} e^{-\blx\sce{\rate}}
\end{align}
where \(\sce{\cdot}\) is the strong converse exponent 
and \(\dsce{\cdot}\) is its derivative with respect 
to the rate.
Since \(1\geq \dsce{\rate}\geq 0\) for all rates \(\rate\) 
and \(\dsce{\rate}<1\) for small enough rates \(\rate\),
the bound \eqref{eq:RSC} improves \eqref{eq:arimoto} strictly.
In accordance with the corresponding improvements 
of the sphere packing bound for rates below the channel capacity
given in \cite{altugW11,altugW14A,chengHT19,nakiboglu19-ISIT,nakiboglu20F},
we call the bounds of the form given in \eqref{eq:RSC}
refined strong converses. 

Proof of Theorem \ref{thm:constantcomposition-RSC-lowrate} is analogous to 
the proof of refined sphere packing bound presented in 
\cite{nakiboglu19-ISIT,nakiboglu20F}:
it relies on a  tight characterization of the 
trade-off between type-I and type-II error probabilities in 
the hypothesis testing problem  
with (possibly non-stationary) independent samples
through the concepts of Augustin information and mean.
However, in \cite{nakiboglu19-ISIT,nakiboglu20F} 
for the regime of interest 
the optimal tilting parameter is between zero and one;
whereas we are now interested in the regime where 
the optimal tilting parameter is larger than one.
Similarly, in \cite{nakiboglu19-ISIT,nakiboglu20F}
Agustin information measures for orders between zero and one 
are used together with the sphere packing exponent,
whereas we employ
Agustin information measures for orders larger than one 
together with the strong converse exponent in our analysis. 

We conclude this section with an overview of the paper.
In \S\ref{sec:model-and-notation}, we describe our model and notation.
In \S\ref{sec:HypothesisTesting}, we 
employ the concept of tilted probability measure and
the Berry-Esseen theorem to obtain a lower bound 
on the type-II error probability in hypothesis testing problem with independent 
---but not necessarily identically distributed---samples
for the regime where the optimal tilting parameter is larger than one.
In \S\ref{sec:preliminary}, we review Augustin's information measures 
and the strong converse exponent.
In \S\ref{sec:RSC}, we establish a refined strong converse 
for the constant composition codes on stationary memoryless channels.  
We conclude our presentation with a brief discussion
of the results and future work in \S\ref{sec:conclusion}.

\section{Model and Notation}\label{sec:model-and-notation}
We denote 
the set of all probability mass functions that 
are positive only for finitely many elements of \(\inpS\) by \(\pdis{\inpS}\)
and the set of all probability measures on a measurable space \((\outS,\outA)\)
by \(\pmea{\outA}\).
The \({\cal L}^{1}\) norm of a measure \(\mean\) is denoted by \(\lon{\mean}\).
The expected value and variance of a measurable function \(\fX\) 
under the probability measure \(\mean\) are denoted 
by
\(\EXS{\mean}{\fX}\) and \(\VXS{\mean}{\fX}\).
The Cartesian product of sets \(\inpS_{1},\ldots,\inpS_{\blx}\) 
is denoted by \(\inpS_{1}^{\blx}\);
the product of \(\sigma\)-algebras \(\outA_{1},\ldots,\outA_{\blx}\)
is denoted by \(\outA_{1}^{\blx}\).
The symbol \(\otimes\) is used to denote both products of \(\sigma\)-algebras
and products of measures.
 
A \emph{channel} \(\Wm\) is a function from \emph{the input set} \(\inpS\) to  the set of all probability 
measures on \emph{the output space} \((\outS,\outA)\):
\begin{align}
\label{eq:def:channel}
\Wm:\inpS \to \pmea{\outA}.
\end{align}
If \(\inpS\) and \(\outA\) are both finite sets, then \(\Wm\) is a \emph{discrete channel}.
The product of \(\Wmn{\tin}\!:\!\inpS_{\tin}\!\to\!\pmea{\outA_{\tin}}\) 
for \(\tin\!\in\!\{1,\ldots,\blx\}\) 
is a channel of the form \(\Wmn{[1,\blx]}:\inpS_{1}^{\blx}\to\pmea{\outA_{1}^{\blx}}\) 
satisfying 
\begin{align}
\label{eq:def:product}
\Wmn{[1,\blx]}(\dinp_{1}^{\blx})
&=\bigotimes\nolimits_{\tin=1}^{\blx}\Wmn{\tin}(\dinp_{\tin})
&
&\forall \dinp_{1}^{\blx}\in\inpS_{1}^{\blx}.
\end{align}
Any channel obtained by curtailing the input set of a length \(\blx\)
product channel is called a length \(\blx\) \emph{memoryless channel}.
A product channel \(\Wmn{[1,\blx]}\) is \emph{stationary} iff \(\Wmn{\tin}=\Wm\)
for all \(\tin\)'s for some \(\Wm\!\).
On a stationary channel, we denote the composition 
(i.e. the empirical distribution, the type) 
of each \(\dinp_{1}^{\blx}\) by \(\composition(\dinp_{1}^{\blx})\);
thus \(\composition(\dinp_{1}^{\blx})\in\pdis{\inpS}\).

An \((M,L)\) \emph{channel code} on \(\Wmn{[1,\blx]}\)
is composed of an 
\emph{encoding function} \(\enc\) 
from the message set \(\mesS\DEF\{1,2,\ldots,M\}\) to the input set \(\inpS_{1}^{\blx}\)
and a \(\outA_{1}^{\blx}\)-measurable \emph{decoding function} \(\dec\)
from the output set \(\outS_{1}^{\blx}\) to \(\estS\DEF\{\set{L}:\set{L}\subset\mesS \mbox{~and~}\abs{\set{L}}\leq L\}\).
For any channel code \((\enc,\dec)\) on \(\Wmn{[1,\blx]}\),
\emph{the conditional error probability} \(\Pem{\dmes}\) for \(\dmes \in \mesS\) 
and \emph{the average error probability} \(\Pem{}\)
are defined as
\begin{align}
\notag
\Pem{\dmes}
&\DEF \EXS{\Wmn{[1,\blx]}(\enc(\dmes))}{\IND{\dmes\notin\dec(\out_{1}^{\blx})}},
\\
\notag
\Pem{} 
&\DEF\tfrac{1}{M} \sum\nolimits_{\dmes\in \mesS} \Pem{\dmes}.
\end{align}
A channel code is a constant composition code iff all of its codewords  have 
the same composition, i.e. \(\exists\mP\in\pdis{\inpS}\) such that
\(\composition(\enc(\dmes))=\mP\),  \(\forall\dmes\in\mesS\).

\section{Hypothesis Testing Problem, Tilted Probability Measure, and Berry Esseen Theorem}\label{sec:HypothesisTesting}
Our main aim in this section is to characterize the trade-off between
type-I and type-II error probabilities
using the Berry-Essen theorem and the tilted probability measure.
This trade-off can be studied in various regimes;
in order to specify the regime of interest, let us first 
recall the definition of \renyi divergence and
define the tilted probability measure.
\begin{definition}\label{def:divergence}
	For any \(\rno\in\reals{+}\) and \(\mW,\mQ\in\pmea{\outA}\),
	\emph{the order \(\rno\) \renyi divergence between \(\mW\) and \(\mQ\)} is
	\begin{align}
\notag
	\RD{\rno}{\mW}{\mQ}
	&\DEF \begin{cases}
	\tfrac{1}{\rno-1}\ln \int (\der{\mW}{\rfm})^{\rno} (\der{\mQ}{\rfm})^{1-\rno} \rfm(\dif{\dout})
	&\rno\neq 1\\
	\int  \der{\mW}{\rfm}\left[ \ln\!\der{\mW}{\rfm} -\ln\!\der{\mQ}{\rfm}\right] \rfm(\dif{\dout})
	&\rno=1
	\end{cases}
	\end{align}
	where \(\rfm\) is any measure satisfying \(\mW\AC\rfm\) and \(\mQ\AC\rfm\).
\end{definition}

\begin{definition}\label{def:tiltedprobabilitymeasure}
	For any \(\mW,\mQ\in\pmea{\outA}\), 
	let \(\wmn{ac}\) be the component of \(\mW\) that is absolutely
	continuous in \(\mQ\).
	If \(\lon{\wmn{ac}}\neq0\), then 
	\emph{the order \(1\) tilted probability measure} \(\wma{1}{\mQ}\) is
	\begin{align}
	\notag
	\wma{1}{\mQ}
	&\DEF\tfrac{\wmn{ac}}{\lon{\wmn{ac}}}.
	\end{align}
	Furthermore, for any \(\rno\in\reals{+}\) satisfying 
	\(\RD{\rno}{\wma{1}{\mQ}}{\mQ}<\infty\), 
	\emph{the order \(\rno\) tilted probability measure} \(\wma{\rno}{\mQ}\) 
	is defined in terms of its Radon-Nikodym derivative with respect to 
	\(\mQ\) as follows
\begin{align}
\notag 
\der{}{\mQ}\wma{\rno}{\mQ}
&\DEF e^{(1-\rno)\RD{\rno}{\wma{1}{\mQ}}{\mQ}}\left(\der{\wma{1}{\mQ}}{\mQ}\right)^{\rno}.
\end{align}
\end{definition}

The definition of tilted probability measure used in \cite{nakiboglu19B,nakiboglu19C,nakiboglu20F}, 
employs \(\mW\) in the place of \(\wma{1}{\mQ}\).
Whenever \(\wma{1}{\mQ}=\mW\), i.e. whenever \(\mW\AC\mQ\), 
it is equivalent to Definition \ref{def:tiltedprobabilitymeasure}.
For orders in \((0,1)\) these two definitions are equivalent 
even if \(\wma{1}{\mQ}\neq\mW\).
For orders larger than or equal to one, they differ only when
\(\wma{1}{\mQ}\neq\mW\) and \(\RD{\rno}{\wma{1}{\mQ}}{\mQ}<\infty\).
In this case, \(\RD{\rno}{\mW}{\mQ}=\infty\) for all \(\rno\)'s 
in \([1,\infty)\) and 
\(\wma{\rno}{\mQ}\) is not defined according to the definition used  
\cite{nakiboglu19B,nakiboglu19C,nakiboglu20F}
but \(\wma{\rno}{\mQ}\) is defined according to 
Definition \ref{def:tiltedprobabilitymeasure}.

In order to see why Definition \ref{def:tiltedprobabilitymeasure}
can be more relevant than the one used in
\cite{nakiboglu19B,nakiboglu19C,nakiboglu20F}, 
let us consider two probability measures
\(\mW\) and \(\mQ\) for which 
\(\wma{1}{\mQ}\neq\mW\) and \(\RD{\rnb}{\wma{1}{\mQ}}{\mQ}\) is finite
for some \(\rnb>1\).
Then both 
\(\RD{1}{\wma{\rno}{\mQ}}{\mQ}\) and 
\(\RD{1}{\wma{\rno}{\mQ}}{\wma{1}{\mQ}}\)
are analytic, and hence continuous, functions of the order
 \(\rno\) on \((0,\rnb)\)
by \cite[Lemma \ref*{C-lem:analyticity}]{nakiboglu19C}.
On the other hand as, a result of Pinsker's inequality 
\cite[Thm. 31]{ervenH14}
we have
\begin{align}
\notag
\lon{\wma{\rno}{\mQ}-\wma{1}{\mQ}}
&\leq
\sqrt{2 \RD{1}{\wma{\rno}{\mQ}}{\wma{1}{\mQ}}}.
\end{align}
Thus \(\wma{\rno}{\mQ}\) converges in total variation to
\(\wma{1}{\mQ}\), rather than \(\mW\), 
as \(\rno\) converges to one by the continuity 
of \(\RD{1}{\wma{\rno}{\mQ}}{\wma{1}{\mQ}}\) in \(\rno\).
Furthermore, 
the continuity of \(\RD{1}{\wma{\rno}{\mQ}}{\mQ}\) in 
\(\rno\) implies that
\begin{align}
\notag
\lim\nolimits_{\rno\uparrow 1} \RD{1}{\wma{\rno}{\mQ}}{\mQ}
&=\RD{1}{\wma{1}{\mQ}}{\mQ}.
\end{align}
This convergence provides further justification 
to Definition \ref{def:tiltedprobabilitymeasure}
because \(\RD{1}{\mW}{\mQ}=\infty\).
Recall that both definitions of 
the tilted probability measure lead to the same 
\(\wma{\rno}{\mQ}\) for \(\rno\in(0,1)\).

\begin{lemma}\label{lem:SC-HTBE-converse}
For any \(\rno\!\in\!(1,\infty)\),
\(\blx\!\in\!\integers{+}\), 
\(\wmn{\tin},\qmn{\tin}\!\in\!\pmea{\outA_{\tin}}\), 
let \(\wmn{\tin,ac}\) be the component of \(\wmn{\tin}\) that is absolutely continuous in \(\qmn{\tin}\)
and let \(\amn{2}\), \(\amn{3}\), and \(\htdelta\) be
\begin{align}
\notag
\amn{2}
&\!\DEF\!\tfrac{1}{\blx}\!\sum\nolimits_{\tin=1}^{\blx} 
\EXS{\wma{\rno}{\mQ}}{\left(\ln\!\der{\wmn{\tin,ac}}{\qmn{\tin}}-\EXS{\wma{\rno}{\mQ}}{\ln\!\der{\wmn{\tin,ac}}{\qmn{\tin}}}\right)^{2}},
\\
\notag
\amn{3}
&\!\DEF\!\tfrac{1}{\blx}\!\sum\nolimits_{\tin=1}^{\blx} 
\EXS{\wma{\rno}{\mQ}}{\abs{\ln\!\der{\wmn{\tin,ac}}{\qmn{\tin}}-\EXS{\wma{\rno}{\mQ}}{\ln\!\der{\wmn{\tin,ac}}{\qmn{\tin}}}}^{3}},
\\
\notag
\htdelta
&\!\DEF\! \tfrac{1}{e\sqrt{\amn{2}}}\left(\tfrac{1}{\sqrt{2\pi}}+2\tfrac{0.56\amn{3}}{\amn{2}}\right),
\end{align}
where \(\mW\!=\!\otimes_{\tin=1}^{\blx} \wmn{\tin}\) and \(\mQ\!=\!\otimes_{\tin=1}^{\blx} \qmn{\tin}\).
Then for any \(\oev\in\outA_{1}^{\blx}\) 
and \(\rnb\!\in\!\reals{+}\)
satisfying \(\mQ(\oev)\!\leq\!\rnb e^{-\RD{1}{\wma{\rno}{\mQ}}{\mQ}}\),
we have
\begin{align}
\label{eq:lem:SC-HTBE-converse}
\hspace{-.35cm}\mW({\outS_{1}^{\blx}}\!\setminus\!\oev\!)
&\!\geq\!
\left[\prod\limits_{\tin=1}^{\blx}\!\lon{\wmn{\tin,ac}}\!\right]
-\!\tfrac{2e^{\rno} \htdelta^{\frac{1}{\rno}} \rnb^{\frac{\rno-1}{\rno}}}{(\rno-1)^{\sfrac{1}{\rno}}}
\blx^{-\frac{1}{2\rno}}e^{-\RD{1}{\wma{\rno}{\mQ}}{\mW}},\!
\\
\label{eq:lem:SC-HTBE-converse-alternative}
&\!=\!
\left[1\!-\!
\tfrac{2e^{\rno} \htdelta^{\frac{1}{\rno}} \rnb^{\frac{\rno-1}{\rno}}}{(\rno-1)^{\sfrac{1}{\rno}}}
\tfrac{e^{-\RD{1}{\wma{\rno}{\mQ}}{\wma{1}{\mQ}}}}{\blx^{\sfrac{1}{2\rno}}}
\right]
\prod\limits_{\tin=1}^{\blx}\lon{\wmn{\tin,ac}}.
\end{align}
\end{lemma}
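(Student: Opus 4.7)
The plan is to reduce to upper bounding $\wma{1}{\mQ}(\oev)$. Let $\mW_{ac}\DEF\bigotimes_{\tin=1}^{\blx}\wmn{\tin,ac}$ be the $\mQ$-absolutely-continuous component of $\mW$, so $\wma{1}{\mQ}=\mW_{ac}/\lon{\mW_{ac}}$ and $\mW(\outS_1^\blx\setminus\oev)\geq\mW_{ac}(\outS_1^\blx\setminus\oev)=[\prod_\tin\lon{\wmn{\tin,ac}}](1-\wma{1}{\mQ}(\oev))$. The equivalence of \eqref{eq:lem:SC-HTBE-converse} and \eqref{eq:lem:SC-HTBE-converse-alternative} then follows from the identity $e^{-\RD{1}{\wma{\rno}{\mQ}}{\mW}}=[\prod_\tin\lon{\wmn{\tin,ac}}]\,e^{-\RD{1}{\wma{\rno}{\mQ}}{\wma{1}{\mQ}}}$, which holds because $\wma{\rno}{\mQ}$ is concentrated on the support of $\mW_{ac}$ and hence $\der{\wma{\rno}{\mQ}}{\mW}=\der{\wma{\rno}{\mQ}}{\wma{1}{\mQ}}/\lon{\mW_{ac}}$ almost everywhere under $\wma{\rno}{\mQ}$. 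The goal is thus to show $\wma{1}{\mQ}(\oev)\leq\tfrac{2e^\rno\htdelta^{1/\rno}\rnb^{(\rno-1)/\rno}}{(\rno-1)^{1/\rno}}\blx^{-1/(2\rno)}e^{-\RD{1}{\wma{\rno}{\mQ}}{\wma{1}{\mQ}}}$.

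Introduce the log-likelihood ratio $L\DEF\ln\der{\wma{1}{\mQ}}{\mQ}$. By the product structure, $L(\dout_1^\blx)=\sum_\tin\ln\der{\wma{1}{\qmn{\tin}}}{\qmn{\tin}}(\dout_\tin)$; thus under $\wma{\rno}{\mQ}$ the random variable $L$ is a sum of $\blx$ independent contributions whose centered versions have variance $\blx\amn{2}$ and absolute third moment $\blx\amn{3}$ (the centering by $\ln\lon{\wmn{\tin,ac}}$ implicit in the lemma's definitions of $\amn{2},\amn{3}$ leaves these central moments unchanged). Write $\mu\DEF\EXS{\wma{\rno}{\mQ}}{L}$ and split $\oev$ at a threshold $\tau$: on $\{L\leq\tau\}$ the pointwise bound $\der{\wma{1}{\mQ}}{\mQ}\leq e^\tau$ together with the hypothesis $\mQ(\oev)\leq\rnb e^{-\RD{1}{\wma{\rno}{\mQ}}{\mQ}}$ yields $\wma{1}{\mQ}(\oev\cap\{L\leq\tau\})\leq e^\tau\rnb\,e^{-\RD{1}{\wma{\rno}{\mQ}}{\mQ}}$. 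For the complementary part I would use the tilting identity $\dif\wma{1}{\mQ}=e^{(\rno-1)(\RD{\rno}{\wma{1}{\mQ}}{\mQ}-L)}\dif\wma{\rno}{\mQ}$, partition $\{L>\tau\}$ into the unit-length slices $\{L\in[\tau+k,\tau+k+1)\}$ for $k\geq0$, bound the integrand by its largest value on each slice, and invoke the Berry-Esseen anti-concentration estimate $\wma{\rno}{\mQ}(L\in[\tau+k,\tau+k+1))\leq\tfrac{1}{\sqrt{\blx\amn{2}}}\bigl(\tfrac{1}{\sqrt{2\pi}}+\tfrac{2\cdot 0.56\amn{3}}{\amn{2}}\bigr)=e\htdelta/\sqrt{\blx}$, which comes from applying Berry-Esseen at the two endpoints of each slice and bounding the standard-normal density by $1/\sqrt{2\pi}$. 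Summing the resulting geometric series gives $\wma{1}{\mQ}(L>\tau)\leq\tfrac{e\htdelta/\sqrt{\blx}}{1-e^{-(\rno-1)}}\,e^{(\rno-1)(\RD{\rno}{\wma{1}{\mQ}}{\mQ}-\tau)}$.

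The two pieces combine into an expression of the form $Ae^\tau+Be^{-(\rno-1)\tau}$ whose minimum over $\tau$ is the Hölder-type quantity $\tfrac{\rno}{(\rno-1)^{(\rno-1)/\rno}}A^{(\rno-1)/\rno}B^{1/\rno}$. Substituting $A=\rnb e^{-\RD{1}{\wma{\rno}{\mQ}}{\mQ}}$ and $B\propto\htdelta\blx^{-1/2}e^{(\rno-1)\RD{\rno}{\wma{1}{\mQ}}{\mQ}}$, and applying the algebraic identities $\mu=\RD{1}{\wma{\rno}{\mQ}}{\mQ}-\RD{1}{\wma{\rno}{\mQ}}{\wma{1}{\mQ}}$ and $(\rno-1)(\mu-\RD{\rno}{\wma{1}{\mQ}}{\mQ})=\RD{1}{\wma{\rno}{\mQ}}{\wma{1}{\mQ}}$, the exponential factors collapse into a single $e^{-\RD{1}{\wma{\rno}{\mQ}}{\wma{1}{\mQ}}}$ while the $\rnb^{(\rno-1)/\rno}$, $\htdelta^{1/\rno}$, and $\blx^{-1/(2\rno)}$ factors emerge automatically from the exponents delivered by the Hölder balance. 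The main obstacle is purely bookkeeping: the constant that arises naturally from the $\tau$-optimization combined with the geometric series is in fact sharper than the stated $\tfrac{2e^\rno}{(\rno-1)^{1/\rno}}$, so one has to apply routine relaxations such as $\tfrac{1}{1-e^{-(\rno-1)}}\leq\tfrac{e^{\rno-1}}{\rno-1}$ and similar absorption of $\rno$-dependent factors into $e^\rno$ to consolidate the various $(\rno-1)$-powers into the clean prefactor stated in the lemma.
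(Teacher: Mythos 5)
Your proposal is correct and follows essentially the same route as the paper: tilt to \(\wma{\rno}{\mQ}\), slice the log-likelihood into unit intervals, apply the Berry--Esseen anti-concentration bound to each slice, sum a geometric series, and optimize the split point \(\tau\). The only cosmetic differences are that you renormalize at the start to reduce everything to bounding \(\wma{1}{\mQ}(\oev)\) (the paper instead carries the factor \(\prod_\tin\lon{\wmn{\tin,ac}}\) throughout and works with \(\mW\)) and that you treat the low-\(L\) region \(\{L\leq\tau\}\) in one shot via the pointwise bound \(\der{\wma{1}{\mQ}}{\mQ}\leq e^\tau\), whereas the paper also slices that region; both variants collapse to the same exponential term and the same prefactor after applying \(\tfrac{1}{1-e^{-(\rno-1)}}\leq\tfrac{e^{\rno-1}}{\rno-1}\).
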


Lemma \ref{lem:SC-HTBE-converse} is often applied for the case when \(\wmn{\tin}\) 
is absolutely continuous in \(\qmn{\tin}\) for all \(\tin\),
i.e. in the case when \(\wmn{\tin}\AC\qmn{\tin}\) for all \(\tin\).
In that case \(\wmn{\tin,ac}=\wmn{\tin}\) for all  \(\tin\)
and thus \(\prod\nolimits_{\tin=1}^{\blx}\lon{\wmn{\tin,ac}}=1\).	

The lower bound asserted in Lemma \ref{lem:SC-HTBE-converse} is tight
in the sense that for any \(\rno\!\in\!(1,\infty)\) and 
\(\rnb\!\in\!\left[\tfrac{9\htdelta e^{\rno \delta}}{\sqrt{\blx}}
e^{-\rno\sqrt{\amn{2}\blx}},
\tfrac{9\htdelta}{\sqrt{\blx}} e^{\rno\sqrt{\amn{2}\blx}}\right]\),
there exists an \(\oev\!\in\!\outA_{1}^{\blx}\) such that
\begin{align}
\notag
	\mQ(\oev)
	&\!\leq\!\rnb e^{-\RD{1}{\wma{\rno}{\mQ}}{\mQ}},	
	\\
	\notag
	\hspace{-.2cm}
	\mW({\outS_{1}^{\blx}}\setminus\oev)
	&\!\leq\! \left[\prod\limits_{\tin=1}^{\blx}\lon{\wmn{\tin,ac}}\right]
	-\tfrac{e^{(1-\rno)\delta}}{\sqrt{2\pi\amn{2}}}
	(\tfrac{\rnb}{9\htdelta})^{\frac{\rno-1}{\rno}}
	\tfrac{e^{-\RD{1}{\wma{\rno}{\mQ}}{\mW}}}{\blx^{\sfrac{1}{2\rno}}},
	\end{align}
where \(\delta=e\sqrt{2\pi e\amn{2}}\htdelta\),
see Appendix \ref{sec:SC-HTBE-achievability} for a proof. 

One can calculate exact asymptotic value of the constant
in the trade-off between error probabilities
in the hypothesis testing problem in this regime,
under stronger hypotheses. 
For the stationary case
---i.e. the case when \(\wmn{\tin}\!=\!\wmn{1}\), \(\qmn{\tin}\!=\!\qmn{1}\) for all \(\tin\)--- 
assuming \(\mW\AC\mQ\),
first 
\csiszar and\footnote{The approach of \cite{csiszarL71} is sound, but its calculations
	seem to have some mistakes.} Longo \cite{csiszarL71} 
and more recently Vazquez-Vilar \emph{et al.} \cite{vazquezFKL18}
have discussed this problem.

\begin{proof}[Proof of Lemma \ref{lem:SC-HTBE-converse}]
	Let the random variables \(\cln{\tin}\) and \(\cln{}\) 
	be\footnote{\(\cln{\tin}\) and \(\cln{}\) are implicitly assumed 
		to be zero outside the support of \(\mQ\).}
	\begin{align}
	\notag
	\cln{\tin}
	&\!\DEF\!\ln\!\der{\wmn{\tin,ac}}{\qmn{\tin}},
	\\
	\notag
	\cln{}
	&\!\DEF\! \sum\nolimits_{\tin=1}^{\blx} \cln{\tin}.
	\end{align}
	Then \(\cln{}\!=\!\ln\!\der{\wmn{ac}}{\mQ}\), and hence 
	\(\cln{}\!=\!\ln\!\der{\mW}{\mQ}\),
	holds \(\mQ\)-a.s., 
	and the Radon-Nikodym derivatives 
	\(\der{\wma{\rno}{\mQ}}{\mQ}\) and \(\der{\wma{\rno}{\mQ}}{\mW}\)
	can be expressed in terms of \(\cln{}\)
	as follows
\begin{align}
\label{eq:CLQ}
\ln\der{\wma{\rno}{\mQ}}{\mQ}
&\!=\!\RD{1}{\wma{\rno}{\mQ}}{\mQ}+\rno\left(\cln{}-\EXS{\wma{\rno}{\mQ}}{\cln{}}\right), 
\\
\label{eq:CLW}
\ln\der{\wma{\rno}{\mQ}}{\mW}
&\!=\!\RD{1}{\wma{\rno}{\mQ}}{\!\mW}+(\rno\!-\!1)\left(\cln{}-\EXS{\wma{\rno}{\mQ}}{\cln{}}\right). 
\end{align}	
For each integer \(\knd\), let the set \(\set{B}_{\knd}\) be
\begin{align}
\label{eq:SC-HTBE-Bdef}
\set{B}_{\knd}
&\!\DEF\! \left\{\dout_{1}^{\blx}\!:\!\tau+\knd\leq \cln{}-\EXS{\wma{\rno}{\mQ}}{\cln{}}<\tau+\knd+1 \right\}. 
\end{align}
Then for any \(\oev\!\in\!\outA_{1}^{\blx}\) and \(\knd\!\in\!\integers{}\), 
we can bound \(\wma{\rno}{\mQ}(\oev\!\cap\!\set{B}_{\knd})\) from above
in terms of  \(\mQ(\oev \cap \set{B}_{\knd})\) using \eqref{eq:CLQ}
and from below 
in terms of  \(\mW(\oev \cap \set{B}_{\knd})\) using \eqref{eq:CLW},
as follows
\begin{align}
\label{eq:SC-HTBE-CB1}
\wma{\rno}{\mQ}(\oev \cap \set{B}_{\knd})
&\!\leq\!\mQ(\oev \cap \set{B}_{\knd}) e^{\RD{1}{\wma{\rno}{\mQ}}{\mQ}+\rno\tau+\rno(\knd+1)},
\\
\label{eq:SC-HTBE-CB2}
\wma{\rno}{\mQ}(\oev \cap \set{B}_{\knd})
&\!\geq\!\mW(\oev \cap \set{B}_{\knd})
e^{\RD{1}{\wma{\rno}{\mQ}}{\mW}+(\rno-1)\tau+(\rno-1)\knd}.
\end{align}
In order bound \(\mW({\outS_{1}^{\blx}}\setminus\oev)\) we use
\(\mW({\outS_{1}^{\blx}}\setminus\oev)\geq \mW(\set{B}_{\integers{}}\setminus\oev)\)
and
\(\mW(\set{B}_{\integers{}}\setminus\oev)=\mW(\set{B}_{\integers{}})-\mW(\set{B}_{\integers{}}\cap\oev)\)
where \(\set{B}_{\integers{}}\!\DEF\!\!\cup_{\knd\in\integers{}}\set{B}_{\knd}\). 
First note that  for any family of reference measures  \(\{\rfm_{\tin}\}\) satisfying 
\(\wmn{\tin}\AC\rfm_{\tin}\) and \(\qmn{\tin}\AC\rfm_{\tin}\) for all \(\tin\) we have
\begin{align}
\notag
\mW(\set{B}_{\integers{}})
&=\left(\bigotimes\nolimits_{\tau=1}^{\blx}\wmn{\tau}\right)
\left(\left\{\dout_{1}^{\blx}\!:\!\der{\wmn{\tin}}{\rfm_{\tin}}>0 \mbox{~and~} \der{\qmn{\tin}}{\rfm_{\tin}}>0 
\quad \forall \tin \right\}\right)
\\
\notag
&=\prod\nolimits_{\tin=1}^{\blx}\wmn{\tin} \left(
\left\{\dout_{\tin}\!:\!\der{\wmn{\tin}}{\rfm_{\tin}}>0 \mbox{~and~} \der{\qmn{\tin}}{\rfm_{\tin}}>0 \right\}
\right)
\\
\label{eq:SC-HTBE-BZ}
&=\prod\nolimits_{\tin=1}^{\blx}\lon{\wmn{\tin,ac}}.
\end{align}
Thus for 
\(\set{B}_{\integers{\leq0}\!}\!\!\DEF\!\cup_{\knd\in\integers{\leq0}}\!\set{B}_{\knd\!}\)
and
\(\set{B}_{\integers{+}}\!\DEF\!\cup_{\knd\in\integers{+}}\set{B}_{\knd}\),
we have
\begin{align}
\label{eq:SC-HTBE-converse-1}
\mW({\outS_{1}^{\blx}}\!\setminus\!\oev)
&\!\geq\!\prod\nolimits_{\tin=1}^{\blx}\!\lon{\wmn{\tin,ac}}
\!-\!\mW(\oev\!\cap\!\set{B}_{\integers{\leq0\!}})
\!-\!\mW(\oev\!\cap\!\set{B}_{\integers{+}\!}).
\end{align}
In order to bound \(\mW(\oev\!\cap\!\set{B}_{\integers{\leq0}})\)
we use 
\eqref{eq:SC-HTBE-CB1}, \eqref{eq:SC-HTBE-CB2}, 
the identity \(\mQ(\oev \cap \set{B}_{\knd})\leq \mQ(\oev)\),
the hypothesis  \(\mQ(\oev)\!\leq\!\rnb e^{-\RD{1}{\wma{\rno}{\mQ}}{\mQ}}\),
and the formula for the sum of a geometric series:
	\begin{align}
	\notag
	\mW(\oev \cap \set{B}_{\integers{\leq0}})
	&=\sum\nolimits_{\knd\in\integers{\leq 0}} \mW(\oev \cap \set{B}_{\knd})
	\\
	\notag
	&\leq\sum\nolimits_{\knd\in\integers{\leq 0}} 
	\rnb e^{\tau+\knd+\rno-\RD{1}{\wma{\rno}{\mQ}}{\mW}}
	\\
	\label{eq:SC-HTBE-converse-2}
	&\leq \rnb  e^{\tau+\rno-\RD{1}{\wma{\rno}{\mQ}}{\mW}}\tfrac{1}{1-e^{-1}}.
	\end{align}
	On the other hand \(\cln{\tin}\)'s are jointly independent under \(\wma{\rno}{\mQ}\).
	Thus the Berry-Esseen theorem \cite{berry41,esseen42,shevtsova10} implies
	\begin{align}
	\notag
	\wma{\rno}{\mQ}(\set{B}_{\knd})
	&\leq \GCD{\tfrac{\tau+\knd+1}{\sqrt{\amn{2}\blx}}}-\GCD{\tfrac{\tau+\knd}{\sqrt{\amn{2}\blx}}}
	+2\tfrac{0.56}{\sqrt{\blx}} 
	\tfrac{\amn{3}}{\amn{2}\sqrt{\amn{2}}}
	\\
	\notag 
	&\leq \tfrac{1}{\sqrt{\amn{2}\blx}}\left(\tfrac{1}{\sqrt{2\pi}}+2\tfrac{0.56 \amn{3}}{\amn{2}}\right).
	\\
\notag
	&\leq e\htdelta\blx^{-\sfrac{1}{2}}.
	\end{align}
	Thus we can bound \(\mW(\oev \cap \set{B}_{\integers{+}})\)
	using \eqref{eq:SC-HTBE-CB2}, the fact that 
	\(
	\wma{\rno}{\mQ}(\oev \cap \set{B}_{\knd})\!\leq\!\wma{\rno}{\mQ}(\set{B}_{\knd})\), 
	and the formula for the sum of a geometric series, as well:
	\begin{align}
	\notag
	\mW(\oev \cap \set{B}_{\integers{+}})
	&=\sum\nolimits_{\knd\in\integers{+}} \mW(\oev \cap \set{B}_{\knd})
	\\
	\notag
	&\leq\sum\nolimits_{\knd\in\integers{+}}  
	e\htdelta\blx^{-\sfrac{1}{2}}
	e^{-\RD{1}{\wma{\rno}{\mQ}}{\mW}+(1-\rno)\tau+(1-\rno)\knd}
	\\
	\label{eq:SC-HTBE-converse-3}
	&\leq
	e\htdelta\blx^{-\sfrac{1}{2}}	
	e^{-\RD{1}{\wma{\rno}{\mQ}}{\mW}+(1-\rno)\tau}\tfrac{e^{1-\rno}}{1-e^{1-\rno}}.
	\end{align}
	For \(\tau\!=\!\tfrac{2\ln \htdelta-2\ln\rnb-\ln\blx}{2\rno}+\tfrac{1}{\rno}\ln \tfrac{e-1}{e^{\rno-1}-1}-1\),
	\eqref{eq:lem:SC-HTBE-converse} follows from
	\eqref{eq:SC-HTBE-converse-1},
	\eqref{eq:SC-HTBE-converse-2}, \eqref{eq:SC-HTBE-converse-3}, and 
	the identity \(\tfrac{(e-1)^{1-\rno}}{e^{\rno-1}-1}\leq \tfrac{1}{\rno-1}\).

\end{proof}

\section{Augustin Information, Augustin Mean and The Strong Converse Exponent}\label{sec:preliminary}
Our primary goal in this section is to define 
the Augustin information and mean and 
the strong converse exponent 
and review those properties of them that 
will be useful in our analysis. 
Let us start by defining the conditional 
\renyi divergence: 
\begin{definition}\label{def:conditionaldivergence}
	For any \(\rno\in\reals{+}\), \(\Wm:\inpS\to\pmea{\outA}\), \(\mQ\in\pmea{\outA}\),
	and \(\mP\in\pdis{\inpS}\) \emph{the order \(\rno\) conditional \renyi divergence for 
		the input distribution \(\mP\)} is
	\begin{align}
	\label{eq:def:conditionaldivergence}
	\CRD{\rno}{\Wm\!}{\mQ}{\mP}
	&\DEF \sum\nolimits_{\dinp\in \inpS}  \mP(\dinp) \RD{\rno}{\Wm(\dinp)}{\mQ}.
	\end{align}
\end{definition}

\begin{definition}\label{def:information}
\hspace{-.2cm} For any \(\!\rno\!\in\!\reals{+}\), \(\!\Wm\!:\!\inpS\!\to\!\pmea{\outA}\), and 
\(\!\mP\!\in\!\pdis{\inpS}\!\) 
\emph{the order \(\rno\!\) Augustin information for the input distribution \(\mP\!\)} is
\begin{align}
\label{eq:def:information}
\RMI{\rno}{\mP}{\Wm}
&\DEF \inf\nolimits_{\mQ\in \pmea{\outA}} \CRD{\rno}{\Wm\!}{\mQ}{\mP}.
\end{align}
\end{definition}
The infimum in \eqref{eq:def:information} is achieved by a unique probability 
measure \(\qmn{\rno,\mP}\), called
\emph{the order \(\rno\) Augustin mean for the input distribution \(\mP\)},
by \cite[{Lemma \ref*{C-lem:information}-(\ref*{C-information:one},\ref*{C-information:zto},\ref*{C-information:oti})}]{nakiboglu19C}. 
Furthermore,
\begin{align}
\label{eq:augustinslaw}
\hspace{-.35cm}\RD{1 \vee \rno}{\qmn{\rno,\mP}}{\mQ}
\!\geq\!
\CRD{\rno}{\Wm\!}{\mQ}{\mP}\!-\!\RMI{\rno}{\mP}{\Wm}
&\!\geq\!
\RD{1 \wedge \rno}{\qmn{\rno,\mP}}{\mQ}
\end{align}
for all \(\mQ\in\pmea{\outA}\) by 
\cite[{Lemma \ref*{C-lem:information}-(\ref*{C-information:one},\ref*{C-information:zto},\ref*{C-information:oti})}]{nakiboglu19C}, as well.
\(\RMI{\rno}{\mP}{\Wm}\) is continuously differentiable in \(\rno\) on \(\reals{+}\)
and 
\begin{align}
\label{eq:lem:informationO:differentiability-alt}
\pder{}{\rno}\RMI{\rno}{\mP}{\Wm}
&=\begin{cases}
\tfrac{1}{(\rno-1)^2}\CRD{1}{\Wma{\rno}{\qmn{\rno,\mP}}}{\Wm}{\mP}
&\rno\neq 1
\\
\sum\nolimits_{\dinp}\tfrac{\mP(\dinp)}{2}
\VXS{\Wm(\dinp)}{\ln \der{\Wm(\dinp)}{\qmn{1,\mP}}}
&\rno= 1
\end{cases}
\end{align}
by \cite[Lemma {\ref*{C-lem:informationO}-(\ref*{C-informationO:differentiability})}]{nakiboglu19C},
where \(\Wma{\rno}{\qmn{\rno,\mP}}(\dinp)\) 
is the order \(\rno\) tilted probability measure between \(\!\Wm\!(\dinp)\!\) and \(\qmn{\rno,\mP}\).
 
\(\Wma{\rno}{\qmn{\rno,\mP}}\) is called 
\emph{the order \(\rno\) tilted channel} for the channel \(\Wm\)
and the output distribution \(\qmn{\rno,\mP}\).
The tilted channel is also used to express \(\RMI{\rno}{\mP}{\Wm}\)
in terms of the Kullback-Leibler divergences in
\cite[Lemma \ref*{C-lem:information}-(\ref*{C-information:alternative})]{nakiboglu19C}:
\begin{align}
\label{eq:lem:information:alternative:opt}
\RMI{\rno}{\mP}{\Wm}
&=\tfrac{\rno}{1-\rno}\CRD{1}{\Wma{\rno}{\qmn{\rno,\mP}}}{\Wm}{\mP}
+\RMI{1}{\mP}{\Wma{\rno}{\qmn{\rno,\mP}}}.
\end{align}
Since \(\sum\nolimits_{\dinp}\!\mP(\dinp)\!\Wma{\rno}{\qmn{\rno,\mP}}\!(\dinp)
\!=\!\qmn{\rno,\mP}\!\)
by\footnote{In fact the Augustin mean is the only probability measure 
	satisfying such a fixed point property by
	\cite[{Lemma \!\ref*{C-lem:information}}]{nakiboglu19C},
	as well.} 
\cite[{Lemma \ref*{C-lem:information}-(\ref*{C-information:one},\ref*{C-information:zto},\ref*{C-information:oti})}]{nakiboglu19C},
we also have the following identity for all \(\rno\!\in\!\reals{+}\)
\begin{align}
\label{eq:haroutunian}
\RMI{1}{\mP}{\Wma{\rno}{\qmn{\rno,\mP}}}
&=\CRD{1}{\Wma{\rno}{\qmn{\rno,\mP}}}{\qmn{\rno,\mP}}{\mP}.
\end{align}
A more comprehensive discussion of Augustin's information measures
can be found in \cite{nakiboglu19C}. 

\begin{definition}\label{def:strongconverseexponent}
	For any \(\Wm\!:\!\inpS\to \pmea{\outA}\), \(\mP\!\in\!\pdis{\inpS}\), 
	and \(\rate\!\in\!\reals{+}\), the strong converse exponent (SCE)  is
	\begin{align}
	\label{eq:def:strongconverseexponent}
	\sce{\rate,\Wm\!,\mP}
	&\DEF \sup\nolimits_{\rno\in (1,\infty)} \tfrac{1-\rno}{\rno} \left(\RMI{\rno}{\mP}{\Wm}-\rate\right).
	\end{align}
\end{definition}
We can apply the derivative test to determine \(\sce{\rate,\Wm\!,\mP}\),
because \(\RMI{\rno}{\mP}{\Wm}\) is continuously differentiable in
the order \(\rno\) by 
\cite[Lemma  {\ref*{C-lem:informationO}-(\ref*{C-informationO:differentiability})}]{nakiboglu19C}. 
Equations \eqref{eq:lem:informationO:differentiability-alt} and
\eqref{eq:lem:information:alternative:opt}
imply 
\begin{align}
\label{eq:parametric-derivative}
\pder{}{\rno}\tfrac{1-\rno}{\rno}\left(\RMI{\rno}{\mP}{\Wm}\!-\!\rate\right)
&\!=\!\tfrac{1}{\rno^{2}}\!\left(\rate\!-\!
\RMI{1}{\mP}{\Wma{\rno}{\qmn{\rno,\mP}}}\right).
\end{align}
On the other hand, either 
\(\RMI{1}{\mP}{\Wma{\rno}{\qmn{\rno,\mP}}}\) 
is increasing and continuous in \(\rno\) on \(\reals{+}\),
or
\(\RMI{1}{\mP}{\Wma{\rno}{\qmn{\rno,\mP}}}\!=\!\RMI{1}{\mP}{\Wm}\) 
for all positive \(\rno\)
by 
\cite[Lemma {\ref*{C-lem:informationO}-(\ref*{C-informationO:monotonicityofharoutunianinformation})}]{nakiboglu19C}.
Furthermore,
\(\RMI{1}{\mP}{\Wma{1}{\qmn{1,\mP}}}\) 
is equal to  \(\RMI{1}{\mP}{\Wm}\).
Thus
for any rate \(\rate\) in 
\((\RMI{1}{\mP}{\Wm},\lim_{\rno\uparrow \infty\!}\RMI{1}{\mP}{\Wma{\rno}{\qmn{\rno,\mP}}})\),
there exists an order \(\rns\) in \((1,\!\infty\!)\) satisfying 
\begin{align}
\label{eq:parametric-haroutunianform-rate}
\rate
&=\RMI{1}{\mP}{\Wma{\rns}{\qmn{\rns,\mP}}}
\end{align}
by the intermediate value theorem \cite[4.23]{rudin}.
The \(\rns\) satisfying \eqref{eq:parametric-haroutunianform-rate}
is unique  because \(\RMI{1}{\mP}{\Wma{\rno}{\qmn{\rno,\mP}}}\)
is increasing in \(\rno\). 
The monotonicity of
\(\RMI{1}{\mP}{\Wma{\rno}{\qmn{\rno,\mP}}}\) in \(\rno\)
and  \eqref{eq:parametric-derivative} also implies
\(\sce{\rate,\Wm\!,\mP}=\tfrac{1-\rns}{\rns}\left(\RMI{\rns}{\mP}{\Wm}\!-\!\rate\right)\).
Thus as a result of  \eqref{eq:lem:information:alternative:opt}, 
the unique \(\rns\) satisfying \eqref{eq:parametric-haroutunianform-rate} also satisfies
\begin{align}
\label{eq:parametric-haroutunianform-exponent}
\sce{\rate,\Wm\!,\mP}
&=\CRD{1}{\Wma{\rns}{\qmn{\rns,\mP}}}{\Wm}{\mP}.
\end{align} 

Since \(\CRD{1}{\Wma{\rno}{\qmn{\rno,\mP}}}{\qmn{\rno,\mP}}{\mP}\) 
is continuous and increasing in \(\rno\), 
its inverse is increasing  and continuous, as well. 
Thus the definition of 
SCE given in \eqref{eq:def:strongconverseexponent} 
and the definition of derivative as a limit
imply that for any \(\rate\) in 
\(\left(\RMI{1}{\mP}{\Wm},\lim_{\rno \uparrow \infty}\RMI{1}{\mP}{\Wma{\rno}{\qmn{\rno,\mP}}}\right)\)
the unique \(\rns\)
satisfying \eqref{eq:parametric-haroutunianform-rate}
also satisfies
\begin{align}
\label{eq:parametric-haroutunianform-slope}
\pder{}{\rate}\sce{\rate,\Wm\!,\mP}
&=\tfrac{\rns-1}{\rns}.
\end{align}
If \(\rate\geq \lim_{\rno \uparrow \infty}\RMI{1}{\mP}{\Wma{\rno}{\qmn{\rno,\mP}}}\),
then the derivative given in
\eqref{eq:parametric-derivative}
is positive for all \(\rno\in(1,\infty)\) and
thus
\begin{align}
\notag
\sce{\rate,\Wm\!,\mP}
&=\lim\nolimits_{\rno\uparrow \infty} \tfrac{1-\rno}{\rno} \left(\RMI{\rno}{\mP}{\Wm}-\rate\right)
\\
\label{eq:highrate-SCE}
&=\rate-\RMI{\infty}{\mP}{\Wm} 
\end{align}
for all \(\rate\geq \lim_{\rno \uparrow \infty}\RMI{1}{\mP}{\Wma{\rno}{\qmn{\rno,\mP}}}\).

On the other hand, if \(\rate\leq \RMI{1}{\mP}{\Wma{1}{\qmn{1,\mP}}}\),
then the derivative given in
\eqref{eq:parametric-derivative}
is negative for all \(\rno\in(1,\infty)\) and
thus
\begin{align}
\notag
\sce{\rate,\Wm\!,\mP}
&=\lim\nolimits_{\rno\downarrow 1} \tfrac{1-\rno}{\rno} \left(\RMI{\rno}{\mP}{\Wm}-\rate\right)
\\
\label{eq:low-SCE}
&=0.
\end{align}
for all \(\rate\leq  \RMI{1}{\mP}{\Wm}\).

Equations 
\eqref{eq:parametric-haroutunianform-rate},
\eqref{eq:parametric-haroutunianform-exponent},
\eqref{eq:parametric-haroutunianform-slope},
\eqref{eq:highrate-SCE}, and \eqref{eq:low-SCE}
characterize the strong converse exponent 
\(\sce{\rate,\Wm\!,\mP}\) defined in 
\eqref{eq:def:strongconverseexponent} 
as a non-decreasing continuously differentiable convex function 
that is 
strictly convex on 
\((\RMI{1}{\mP}{\Wm},\lim_{\rno \uparrow \infty}\RMI{1}{\mP}{\Wma{\rno}{\qmn{\rno,\mP}}})\)
and 
increasing on \((\RMI{1}{\mP}{\Wm},\infty)\).

\begin{remark}
The definition of \(\sce{\rate,\Wm\!,\mP}\) given in \eqref{eq:def:strongconverseexponent} is equivalent to
the one used by Dueck and \korner \cite{dueckK79}.
In order to see why, recall that the Augustin
information satisfies the following variational
characterization by \cite[Lemma \ref*{C-lem:information}-(\ref*{C-information:alternative})]{nakiboglu19C}
\begin{align}
\notag
\tfrac{1-\rno}{\rno}\RMI{\rno}{\mP}{\Wm}
&=\inf\limits_{\Vm\in\pmea{\outA|\inpS}}\CRD{1}{\Vm}{\Wm}{\mP}+\tfrac{1-\rno}{\rno}\RMI{1}{\mP}{\Vm}.
\end{align}
Thus \(\sce{\rate,\Wm\!,\mP}\) can be written as follows 
for \(\mS=\tfrac{\rno-1}{\rno}\):
\begin{align}
\notag
&\hspace{-.3cm}\sce{\rate,\Wm,\mP}
\\
\notag
&=\sup_{\mS\in(0,1)} 
\inf\limits_{\Vm\in\pmea{\outA|\inpS}}\CRD{1}{\Vm}{\Wm}{\mP}+\mS(\rate-\RMI{1}{\mP}{\Vm}) 
\\
\notag
&=\inf\limits_{\Vm\in\pmea{\outA|\inpS}}\sup_{\mS\in(0,1)} 
\CRD{1}{\Vm}{\Wm}{\mP}+\mS(\rate-\RMI{1}{\mP}{\Vm}) 
\\
\notag
&=\inf\limits_{\Vm\in\pmea{\outA|\inpS}}
\CRD{1}{\Vm}{\Wm}{\mP}+\abp{\rate-\RMI{1}{\mP}{\Vm}}.
\end{align}
We can change the order of the infimum and supremum 
using Sion's minimax theorem \cite{sion58,komiya88}
because we can replace \(\pmea{\outA|\inpS}\)
by the set of elements of 
\(\pmea{\outA|\supp{\mP}}\) satisfying
\(\CRD{1}{\Vm}{\Wm}{\mP}\leq \rate\) and 
the latter set is compact in the topology of setwise 
convergence by the necessary and sufficient condition 
for the uniform integrability given by de la
Vallee Poussin \cite[Thm. 4.5.9]{bogachev},
see \cite[(d-iii) on p.36]{nakiboglu19C} for a
similar argument.
\end{remark}

\section{The Refined Strong Converse}\label{sec:RSC}
\begin{theorem}\label{thm:constantcomposition-RSC-lowrate}
For any \(\Wm\!:\!\inpS\!\to\!\pmea{\outA}\), 
\(M,L,\blx\!\in\!\integers{+}\),
\(\mP\!\in\!\pdis{\inpS}\) satisfying 
\(\RMI{1}{\mP}{\Wm}\!<\!\tfrac{1}{\blx}\ln\tfrac{M}{L}
\!<\!\lim\nolimits_{\rno \uparrow \infty}\RMI{1}{\mP}{\Wma{\rno}{\qmn{\rno,\mP}}}\)
and 
\(\blx\mP(\dinp)\!\in\!\integers{\geq0}\) for all \(\dinp\!\in\!\inpS\),
the order \(\rns\!\DEF\!\tfrac{1}{1-\dsce{\frac{1}{\blx}\ln\frac{M}{L},\Wm,\mP}}\) 
	satisfies
	\begin{align}
\label{eq:thm:constantcomposition-RSC-lowrate-hypothesis}
\RMI{1}{\mP}{\Wma{\rns}{\qmn{\rns,\mP}}}
&=\tfrac{1}{\blx}\ln \tfrac{M}{L}.
\end{align}
Furthermore, any \((M,L)\) channel code of length \(\blx\)
whose codewords all have the same composition \(\mP\) satisfies
\begin{align}
\label{eq:thm:constantcomposition-RSC-lowrate}
\Pem{(\blx)}
&\geq
1
-2e^{\rns}\left(\tfrac{\htdelta}{\rns-1}\right)^{\frac{1}{\rns}}
\blx^{-\sfrac{1}{2\rns}}
e^{-\blx \sce{\frac{1}{\blx}\ln\frac{M}{L},\Wm,\mP}}	
	\end{align}	
where
	\begin{align}
	\label{eq:thm:constantcomposition-RSC-lowrate-a2}
	\amn{2}&=
	\EXS{\mP\mtimes\Wma{\rns}{\qmn{\rns,\mP}}}{\abs{\ln\!\der{\Wm}{\qmn{\rns,\mP}}-\EXS{\Wma{\rns}{\qmn{\rns,\mP}}}{\ln\!\der{\Wm}{\qmn{\rns,\mP}}}}^{2}},
	\\
	\label{eq:thm:constantcomposition-RSC-lowrate-a3}
	\amn{3}&=
	\EXS{\mP\mtimes\Wma{\rns}{\qmn{\rns,\mP}}}{
		\abs{\ln\!\der{\Wm}{\qmn{\rns,\mP}}-\EXS{\Wma{\rns}{\qmn{\rns,\mP}}}{\ln\!\der{\Wm}{\qmn{\rns,\mP}}}
		}^{3}},
	\\
\label{eq:thm:constantcomposition-RSC-lowrate-htdelta}
\htdelta
&\!\DEF\! \tfrac{1}{e\sqrt{\amn{2}}}\left(\tfrac{1}{\sqrt{2\pi}}+2\tfrac{0.56\amn{3}}{\amn{2}}\right).
\end{align}
\end{theorem}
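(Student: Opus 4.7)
The first assertion, $\RMI{1}{\mP}{\Wma{\rns}{\qmn{\rns,\mP}}}=\tfrac{1}{\blx}\ln\tfrac{M}{L}$, is essentially an unpacking of Section \ref{sec:preliminary}. The hypothesis on $\rate\DEF\tfrac{1}{\blx}\ln(M/L)$ places it in the open interval $(\RMI{1}{\mP}{\Wm},\lim_{\rno\uparrow\infty}\RMI{1}{\mP}{\Wma{\rno}{\qmn{\rno,\mP}}})$, so by \eqref{eq:parametric-haroutunianform-rate} and \eqref{eq:parametric-haroutunianform-slope} there is a unique $\rno^{\star}\in(1,\infty)$ that simultaneously satisfies $\RMI{1}{\mP}{\Wma{\rno^{\star}}{\qmn{\rno^{\star},\mP}}}=\rate$ and $\pder{}{\rate}\sce{\rate,\Wm,\mP}=\tfrac{\rno^{\star}-1}{\rno^{\star}}$. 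Inverting the slope identity gives $\rno^{\star}=\tfrac{1}{1-\dsce{\rate,\Wm,\mP}}=\rns$, establishing the first claim.

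For the error probability bound, the plan is to combine a list-decoding meta-converse with Lemma \ref{lem:SC-HTBE-converse}. For each $\dmes\in\mesS$, set $\oev_{\dmes}\DEF\{\dout_{1}^{\blx}:\dmes\in\dec(\dout_{1}^{\blx})\}$. The list-size constraint yields $\sum_{\dmes}\IND{\dout_{1}^{\blx}\in\oev_{\dmes}}=|\dec(\dout_{1}^{\blx})|\leq L$, hence $\sum_{\dmes}\qmn{\rns,\mP}^{\otimes\blx}(\oev_{\dmes})\leq L$. Defining $\rnb_{\dmes}\DEF\tfrac{M}{L}\qmn{\rns,\mP}^{\otimes\blx}(\oev_{\dmes})$, the averaged bound $\tfrac{1}{M}\sum_{\dmes}\rnb_{\dmes}\leq 1$ holds, and the threshold hypothesis of Lemma \ref{lem:SC-HTBE-converse} is met with $\mW\DEF\Wmn{[1,\blx]}(\enc(\dmes))$ and $\mQ\DEF\qmn{\rns,\mP}^{\otimes\blx}$, because the constant-composition property together with \eqref{eq:haroutunian} and \eqref{eq:thm:constantcomposition-RSC-lowrate-hypothesis} yields $\RD{1}{\wma{\rns}{\mQ}}{\mQ}=\blx\,\CRD{1}{\Wma{\rns}{\qmn{\rns,\mP}}}{\qmn{\rns,\mP}}{\mP}=\blx\,\RMI{1}{\mP}{\Wma{\rns}{\qmn{\rns,\mP}}}=\ln(M/L)$.

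Next, I apply Lemma \ref{lem:SC-HTBE-converse} to each $\dmes$ while tracking three quantities. First, because $\RMI{\rns}{\mP}{\Wm}<\infty$ with $\rns>1$ forces $\Wm(\dinp)\AC\qmn{\rns,\mP}$ for every $\dinp\in\supp{\mP}$, we have $\prod_{t=1}^{\blx}\lon{\wmn{t,ac}}=1$. Second, again by constant composition and \eqref{eq:parametric-haroutunianform-exponent}, $\RD{1}{\wma{\rns}{\mQ}}{\mW}=\blx\,\CRD{1}{\Wma{\rns}{\qmn{\rns,\mP}}}{\Wm}{\mP}=\blx\,\sce{\rate,\Wm,\mP}$. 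Third, the sums defining $\amn{2}$ and $\amn{3}$ in Lemma \ref{lem:SC-HTBE-converse} collapse to the single-letter quantities \eqref{eq:thm:constantcomposition-RSC-lowrate-a2}--\eqref{eq:thm:constantcomposition-RSC-lowrate-a3}, and hence the resulting $\htdelta$ matches \eqref{eq:thm:constantcomposition-RSC-lowrate-htdelta}, independently of $\dmes$. The lemma therefore gives
\begin{equation*}
\Pem{\dmes}\geq 1-\tfrac{2e^{\rns}}{(\rns-1)^{1/\rns}}\htdelta^{1/\rns}\,\blx^{-1/(2\rns)}\,e^{-\blx\,\sce{\rate,\Wm,\mP}}\,\rnb_{\dmes}^{(\rns-1)/\rns}.
\end{equation*}

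Averaging over $\dmes$ and applying Jensen's inequality to the concave map $x\mapsto x^{(\rns-1)/\rns}$ (valid as $\tfrac{\rns-1}{\rns}\in(0,1)$) together with $\tfrac{1}{M}\sum_{\dmes}\rnb_{\dmes}\leq 1$ yields $\tfrac{1}{M}\sum_{\dmes}\rnb_{\dmes}^{(\rns-1)/\rns}\leq 1$, which delivers exactly \eqref{eq:thm:constantcomposition-RSC-lowrate}. The main place where care is needed, rather than in any one estimate, is verifying that the order-$\rns$ tilted measure on the product channel factorizes coordinatewise into the single-letter tilts $\Wma{\rns}{\qmn{\rns,\mP}}(\enc_{t}(\dmes))$, and that the constant composition assumption simultaneously forces $\prod_{t}\lon{\wmn{t,ac}}=1$ and reduces every time-averaged expectation in the lemma to its single-letter counterpart in \eqref{eq:thm:constantcomposition-RSC-lowrate-a2}--\eqref{eq:thm:constantcomposition-RSC-lowrate-htdelta} uniformly in $\dmes$.
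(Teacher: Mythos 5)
Your proof is correct and follows essentially the same route as the paper: it identifies $\rns$ via the preliminary-section characterization, applies Lemma~\ref{lem:SC-HTBE-converse} per message with $\oev_{\dmes}=\{\dmes\in\dec(\cdot)\}$ and $\rnb_{\dmes}=\tfrac{M}{L}\mQ(\oev_{\dmes})$, and then averages using Jensen's inequality and the list-size constraint $\sum_{\dmes}\mQ(\oev_{\dmes})\leq L$. The explicit verification that $\RMI{\rns}{\mP}{\Wm}<\infty$ forces $\Wm(\dinp)\AC\qmn{\rns,\mP}$ on $\supp{\mP}$ and hence $\prod_{\tin}\lon{\wmn{\tin,ac}}=1$ is a useful clarification that the paper leaves implicit.
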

\begin{theorem}\label{thm:constantcomposition-RSC-highrate}
	For any \(\Wm\!:\!\inpS\!\to\!\pmea{\outA}\), 
	\(M,L,\blx\!\in\!\integers{+}\),
	\(\mP\!\in\!\pdis{\inpS}\) satisfying 
	\(\tfrac{1}{\blx}\ln\tfrac{M}{L}\!\geq\!\lim\nolimits_{\rno \uparrow \infty}\RMI{1}{\mP}{\Wma{\rno}{\qmn{\rno,\mP}}}\)
	and 
	\(\blx\mP(\dinp)\!\in\!\integers{\geq0}\) for all \(\dinp\!\in\!\inpS\),
	any \((M,L)\) channel code of length \(\blx\)
	whose codewords all have the same composition \(\mP\) satisfies
	\begin{align}
	\label{eq:thm:constantcomposition-RSC-highrate}
	\Pem{(\blx)}
	&\geq
	1-e^{-\blx \sce{\frac{1}{\blx}\ln\frac{M}{L},\Wm,\mP}}.
	\end{align}	
\end{theorem}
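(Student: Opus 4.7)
The plan is to prove \eqref{eq:thm:constantcomposition-RSC-highrate} by a direct order-$\infty$ change-of-measure argument, exploiting the explicit form \eqref{eq:highrate-SCE} of the strong converse exponent in this regime. Since the hypothesis $\tfrac{1}{\blx}\ln\tfrac{M}{L}\geq \lim_{\rno\uparrow\infty}\RMI{1}{\mP}{\Wma{\rno}{\qmn{\rno,\mP}}}$ places $\rate\DEF\tfrac{1}{\blx}\ln\tfrac{M}{L}$ in the regime covered by \eqref{eq:highrate-SCE}, we have $\sce{\rate,\Wm,\mP}=\rate-\RMI{\infty}{\mP}{\Wm}$, so \eqref{eq:thm:constantcomposition-RSC-highrate} is equivalent to the finite-$n$ inequality $1-\Pem{(\blx)}\leq \tfrac{L}{M}e^{\blx\RMI{\infty}{\mP}{\Wm}}$. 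This reformulation makes clear that only the order-$\infty$ Augustin information is needed, and no tilting or Berry-Esseen refinement enters.

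To establish the reformulated inequality, I would first invoke Definition \ref{def:information} at $\rno=\infty$ to pick, for each $\eps>0$, an $\eps$-optimizer $\mQ_\eps\in\pmea{\outA}$ with $\CRD{\infty}{\Wm}{\mQ_\eps}{\mP}\leq\RMI{\infty}{\mP}{\Wm}+\eps$. The key observation is that for any codeword $\enc(\dmes)=\dinp_1^\blx$ of composition $\mP$, tensorization of $\RD{\infty}{\cdot}{\cdot}$ across independent coordinates combined with the constancy of the composition yields
\[
\RD{\infty}{\Wmn{[1,\blx]}(\enc(\dmes))}{\mQ_\eps^{\otimes\blx}}
=\sum\nolimits_{\tin=1}^{\blx}\RD{\infty}{\Wm(\dinp_\tin)}{\mQ_\eps}
=\blx\,\CRD{\infty}{\Wm}{\mQ_\eps}{\mP}
\leq \blx(\RMI{\infty}{\mP}{\Wm}+\eps),
\]
because grouping the $\blx$ coordinates by their input symbol turns the first sum into a $\mP$-average multiplied by $\blx$.

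Rewriting this as the uniform Radon-Nikodym bound $\Wmn{[1,\blx]}(\enc(\dmes))(\set{D})\leq e^{\blx(\RMI{\infty}{\mP}{\Wm}+\eps)}\mQ_\eps^{\otimes\blx}(\set{D})$ valid for every $\set{D}\in\outA_{1}^{\blx}$, then applying it to the correct-decoding events $\set{D}_{\dmes}\DEF\{\dout_1^\blx:\dmes\in\dec(\dout_1^\blx)\}$, averaging over $\dmes\in\mesS$, and invoking the $L$-list-decoding constraint $\sum_{\dmes\in\mesS}\IND{\dmes\in\dec(\dout_1^\blx)}\leq L$, I would obtain
\[
1-\Pem{(\blx)}
=\tfrac{1}{M}\sum\nolimits_{\dmes\in\mesS}\Wmn{[1,\blx]}(\enc(\dmes))(\set{D}_\dmes)
\leq \tfrac{e^{\blx(\RMI{\infty}{\mP}{\Wm}+\eps)}}{M}\sum\nolimits_{\dmes\in\mesS}\mQ_\eps^{\otimes\blx}(\set{D}_\dmes)
\leq \tfrac{L}{M}e^{\blx(\RMI{\infty}{\mP}{\Wm}+\eps)}.
\]
Letting $\eps\downarrow 0$ and substituting \eqref{eq:highrate-SCE} concludes the argument.

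There is no real technical obstacle here: the proof uses only the infimum characterization of $\RMI{\infty}{\mP}{\Wm}$, additivity of $\RD{\infty}{\cdot}{\cdot}$ on product measures, constancy of the codeword composition, and the list-decoding constraint. Unlike Theorem \ref{thm:constantcomposition-RSC-lowrate}, no Berry-Esseen sharpening of the prefactor is available or needed here, because in this regime $\dsce{\rate,\Wm,\mP}=1$ by \eqref{eq:highrate-SCE}, so the refined prefactor $\blx^{-0.5(1-\dsce{\rate,\Wm,\mP})}$ in \eqref{eq:RSC} already collapses to the constant $1$ produced by the above argument. The only small care is to use an $\eps$-optimal $\mQ_\eps$ rather than an exact minimizer, since we do not claim that the infimum defining $\RMI{\infty}{\mP}{\Wm}$ is attained.
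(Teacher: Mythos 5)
Your proof is correct, but it takes a genuinely different route from the paper's. The paper does \emph{not} specialize to $\rno=\infty$: it fixes an arbitrary $\rno\in(1,\infty)$, uses the product of the order-$\rno$ Augustin means $\qmn{\rno}=\qmn{\rno,\mP}^{\otimes\blx}$ as the auxiliary output measure (so that $\RD{\rno}{\wmn{\dmes}}{\qmn{\rno}}=\blx\RMI{\rno}{\mP}{\Wm}$ exactly, by constant composition), applies the data processing inequality for the order-$\rno$ R\'enyi divergence to the binary partition $\{\dmes\in\dec,\,\dmes\notin\dec\}$, averages with Jensen and the list constraint $\sum_{\dmes}\qmn{\rno}(\dmes\in\dec)\leq L$ to get
$\Pem{(\blx)}\geq 1-e^{-\frac{1-\rno}{\rno}\blx(\RMI{\rno}{\mP}{\Wm}-\frac{1}{\blx}\ln\frac{M}{L})}$, and finally takes the supremum over $\rno\in(1,\infty)$, which is $\sce{\cdot}$ by Definition~\ref{def:strongconverseexponent}. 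In other words, the paper proves Arimoto's one-shot bound against $\qmn{\rno}$ and only then invokes the definition of the exponent; the high-rate hypothesis plays no role in the paper's argument (it is used only to delimit the regime where Theorem~\ref{thm:constantcomposition-RSC-lowrate} does not apply). Your argument instead starts from the closed form \eqref{eq:highrate-SCE}, $\sce{\rate,\Wm\!,\mP}=\rate-\RMI{\infty}{\mP}{\Wm}$ (which is where the high-rate hypothesis is essential), and replaces the family of R\'enyi data processing steps by a single order-$\infty$ change of measure: an $\eps$-optimal $\mQ_\eps$, additivity of $\RD{\infty}{\cdot}{\cdot}$ on products, constant composition, and the pointwise Radon--Nikodym bound $\Wmn{[1,\blx]}(\enc(\dmes))\leq e^{\blx(\RMI{\infty}{\mP}{\Wm}+\eps)}\mQ_\eps^{\otimes\blx}$. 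Both routes use the list-decoding constraint identically. Your version is more elementary (no $\rno$-dependent data processing, no Jensen step since the exponent $(\rno-1)/\rno\to1$ is linear at $\rno=\infty$) and cleanly explains why no Berry--Esseen refinement is available here, but it works only in the high-rate regime; the paper's argument is a self-contained proof of Arimoto's bound valid at every rate, at the modest cost of carrying $\rno$ through and appealing to \eqref{eq:def:strongconverseexponent} at the end. One small point in your favour: by using an $\eps$-optimizer you avoid having to know whether the infimum defining $\RMI{\infty}{\mP}{\Wm}$ is attained, whereas the paper's choice $\qmn{\rno,\mP}$ for finite $\rno$ is always a genuine minimizer.
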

Theorems 
\ref{thm:constantcomposition-RSC-lowrate}
and
\ref{thm:constantcomposition-RSC-highrate}
collectively imply 
for  all \(\blx\!\in\!\integers{+}\) 
a strong converse of the form
\eqref{eq:RSC} for 
\(\sce{\rate}=\sce{\tfrac{1}{\blx}\ln\tfrac{M}{L},\Wm,\mP}\),
for a constant \(A\) 
determined by the rate \(\rate\), 
the channel \(\Wm\!\), 
and the composition \(\mP\).
Following the convention used for the corresponding improvement of the
sphere packing bound in \cite{altugW11,altugW14A,chengHT19,nakiboglu19-ISIT,nakiboglu20F},
we call these bounds refined strong converses.
\begin{proof}[Proof of Theorem \ref{thm:constantcomposition-RSC-lowrate}]
	The existence of a unique order \(\rns\) satisfying
	\eqref{eq:thm:constantcomposition-RSC-lowrate-hypothesis} was 
	proved and its value was determined  in \S\ref{sec:preliminary}, 
	see \eqref{eq:parametric-haroutunianform-rate},
	\eqref{eq:parametric-haroutunianform-exponent},
	and
	\eqref{eq:parametric-haroutunianform-slope}.
	
	Let the probability measures \(\wmn{\dmes}\), \(\mQ\),  and \(\vmn{\dmes}\)
	in \(\pmea{\outA_{1}^{\blx}}\) be
	\begin{align}
	\notag
	\wmn{\dmes} 
	&\DEF \bigotimes\nolimits_{\tin=1}^{\blx} \Wm(\enc_{\tin}(\dmes)),
	\\
	\notag
	\mQ
	&\DEF \bigotimes\nolimits_{\tin=1}^{\blx} \qmn{\rns,\mP},
	\\
	\notag
	\vmn{\dmes} 
	&\DEF \bigotimes\nolimits_{\tin=1}^{\blx} \Wma{\rns}{\qmn{\rns,\mP}}(\enc_{\tin}(\dmes)).
	\end{align}
	Then \(\vmn{\dmes}\) is equal to the order \(\rns\)
	tilted probability measure between \(\wmn{\dmes}\) and \(\mQ\).
	Furthermore,\footnote{It is worth mentioning that
		both
		\(\RD{1}{\vmn{\dmes}}{\mQ}\) and \(\RD{1}{\vmn{\dmes}}{\wmn{\dmes}}\)	
		can be expressed in this form for all messages because all
		\(\enc(\dmes)\)'s have the same composition \(\mP\).} 
	\begin{align}
	\notag
	\RD{1}{\vmn{\dmes}}{\mQ}
	&=\blx\CRD{1}{\Wma{\rns}{\qmn{\rns,\mP}}}{\qmn{\rns,\mP}}{\mP}
	&
	&\dmes\in\mesS, 
	\\
	\notag
	\RD{1}{\vmn{\dmes}}{\wmn{\dmes}}
	&=\blx\CRD{1}{\Wma{\rns}{\qmn{\rns,\mP}}}{\Wm}{\mP}
	&
	&\dmes\in\mesS. 
	\end{align}

Note that
\(\CRD{1}{\Wma{\rns}{\qmn{\rns,\mP}}}{\qmn{\rns,\mP}}{\mP}
\!=\!\frac{1}{\blx}\ln\frac{M}{L}\)
by \eqref{eq:haroutunian} and \eqref{eq:thm:constantcomposition-RSC-lowrate-hypothesis}
and 
\(\CRD{1}{\Wma{\rns}{\qmn{\rns,\mP}}}{\Wm}{\mP}
\!=\!\sce{\frac{1}{\blx}\ln\frac{M}{L},\Wm,\mP}\)
by 
\eqref{eq:parametric-haroutunianform-rate},
\eqref{eq:parametric-haroutunianform-exponent},
and \eqref{eq:thm:constantcomposition-RSC-lowrate-hypothesis}.
Thus applying Lemma \ref{lem:SC-HTBE-converse}, for 
\(\oev=\{\dout_{1}^{\blx}:\dmes\!\in\!\dec(\dout_{1}^{\blx})\}\)
and \(\rnb=\mQ(\dmes\!\in\!\dec)\tfrac{M}{L}\)
we get
\begin{align}
\label{eq:constantcomposition-1}
\Pem{\dmes}
&\!\geq\!
1\!-\!\tfrac{2e^{\rns}\htdelta^{\sfrac{1}{\rns}}}{(\rns-1)^{\sfrac{1}{\rns}}}
\left( \tfrac{\mQ(\dmes\in\dec) M}{L}\right)^{\frac{\rns-1}{\rns}}
\tfrac{e^{-\blx \sce{\frac{1}{\blx}\ln\frac{M}{L},\Wm,\mP}}}{\blx^{\sfrac{1}{2\rns}}}.	
\end{align}
On the other hand
\(\sum_{m\in\mesS}\mQ(\dmes\in\dec)\leq L\),
as a result of the definition of the list decoding. 
Thus using the concavity of the function
\(\dsta^{\frac{\rns-1}{\rns}}\) in \(\dsta\)
together with the Jensen's inequality get
\begin{align}
\notag
\sum\nolimits_{m\in\mesS} \tfrac{1}{M}
\left( \tfrac{\mQ(\dmes\in\dec) M}{L}\right)^{\frac{\rns-1}{\rns}}
&\leq \left(\sum\nolimits_{m\in\mesS} \tfrac{1}{M} \tfrac{ \mQ(\dmes\in\dec) M}{L}\right)^{\frac{\rns-1}{\rns}}
\\
\notag
&=1.
\end{align}
Then \eqref{eq:thm:constantcomposition-RSC-lowrate}
follows from \eqref{eq:constantcomposition-1}
and the definition error probability as the
average of the conditional error probabilities.
\end{proof}

\begin{proof}[Proof of Theorem \ref{thm:constantcomposition-RSC-highrate}]
Let the probability measures \(\wmn{\dmes}\), \(\qmn{\rno}\)  be
\begin{align}
\notag
\wmn{\dmes} 
&\DEF \bigotimes\nolimits_{\tin=1}^{\blx} \Wm(\enc_{\tin}(\dmes)),
&&\mbox{and}&
\qmn{\rno}
&\DEF \bigotimes\nolimits_{\tin=1}^{\blx} \qmn{\rno,\mP}.
\end{align}
Then \(\RD{\rno}{\wmn{\dmes}}{\qmn{\rno}}=\blx\RMI{\rno}{\mP}{\Wm}\)
for all \(\dmes\) because all \(\enc(\dmes)\)'s have the composition \(\mP\).
On the other hand the data processing inequality of the \renyi divergence,
\cite[Thm 9]{ervenH14}, imply
\begin{align}
\notag
\RD{\rno}{\wmn{\dmes}}{\qmn{\rno}}
&\geq \tfrac{\ln \left[
	(\Pem{\dmes})^{\rno}(\qmn{\rno}(\dmes\in\dec))^{1-\rno}
	+(1-\Pem{\dmes})^{\rno}(\qmn{\rno}(\dmes\in\dec))^{1-\rno}
	\right]}{\rno-1}
\\
\notag
&\geq \tfrac{\ln \left[(1-\Pem{\dmes})^{\rno}(\qmn{\rno}(\dmes\in\dec))^{1-\rno}
	\right]}{\rno-1}.
\end{align}
Thus \(\Pem{\dmes}
\!\geq\!
1-\left(\qmn{\rno}(\dmes\in\dec)\right)^{\frac{\rno-1}{\rno}}e^{\frac{\rno-1}{\rno}\blx\RMI{\rno}{\mP}{\Wm}} 
\).
On the other hand 
the concavity of the function \(\dsta^{\frac{\rno-1}{\rno}}\) in \(\dsta\)
for \(\rno>1\), the Jensen's inequality, and
\(\sum_{m\in\mesS}\qmn{\rno}(\dmes\in\dec)\!\leq\!L\),
imply
\(\sum\nolimits_{m\in\mesS} \tfrac{1}{M} \left(\qmn{\rno}(\dmes\in\dec)\right)^{\frac{\rno-1}{\rno}}
\leq \left(\sfrac{L}{M}\right)^{\frac{\rno-1}{\rno}}\). Hence
\begin{align}
\notag
\Pem{(\blx)}
&\geq
1-e^{-\frac{1-\rno}{\rno}\blx \left(\RMI{\rno}{\mP}{\Wm}-\frac{1}{\blx}\ln \frac{M}{L}\right)} 
&
&\forall \rno\in(1,\infty).
\end{align}
Then \eqref{eq:thm:constantcomposition-RSC-highrate} follows from \eqref{eq:def:strongconverseexponent}.
\end{proof}

\section{Discussion}\label{sec:conclusion}
Although we have confined our analysis to the constant composition 
codes for brevity,
using the Augustin capacity and center 
---instead of Augustin information and mean---
one can obtain analogous results for 
additive white Gaussian noise channels with quadratic cost functions
and
\renyi symmetric channels defined in \cite{nakiboglu20F}.
For \renyi symmetric channels the refined 
strong converse \eqref{eq:RSC}, 
can be established with smaller, i.e., better, constant \(A\)
using the saddle point approximation.
Such a result has been reported in \cite[(36)]{vazquezFKL18},
assuming a common support for all output distributions of 
the channel and a non-lattice structure for the random 
variables involved.
Establishing refined strong converses without any symmetry hypothesis
is the main technical challenge in this line of work.

We believe 
the refined strong converses of the form \eqref{eq:RSC} 
are the best possible bounds 
for derivations of the strong converse relying on 
the asymptotic behavior of sums of independent 
random variables.
Nevertheless for the singular symmetric channels considered 
in \cite{altugW19}, it should be possible to 
improve \eqref{eq:RSC} as 
\(\Pem{(\blx)}\geq 1- A n^{-0.5} e^{-\blx\sce{\rate}}\).

 \appendices
\section{An Matching Achievability Result For Lemma \ref{lem:SC-HTBE-converse}}\label{sec:SC-HTBE-achievability}
\begin{lemma}\label{lem:SC-HTBE-achievability}
	For any \(\rno\!\in\!(1,\infty)\),
	\(\blx\!\in\!\integers{+}\), 
	\(\wmn{\tin},\qmn{\tin}\!\in\!\pmea{\outA_{\tin}}\), 
	let \(\wmn{\tin,ac}\) be the component of \(\wmn{\tin}\) that is absolutely continuous in \(\qmn{\tin}\)
	and let \(\amn{2}\), \(\amn{3}\), and \(\htdelta\) be
	\begin{align}
	\notag
	\amn{2}
	&\!\DEF\!\tfrac{1}{\blx}\!\sum\nolimits_{\tin=1}^{\blx} 
	\EXS{\wma{\rno}{\mQ}}{\left(\ln\!\der{\wmn{\tin,ac}}{\qmn{\tin}}-\EXS{\wma{\rno}{\mQ}}{\ln\!\der{\wmn{\tin,ac}}{\qmn{\tin}}}\right)^{2}},
	\\
	\notag
	\amn{3}
	&\!\DEF\!\tfrac{1}{\blx}\!\sum\nolimits_{\tin=1}^{\blx} 
	\EXS{\wma{\rno}{\mQ}}{\abs{\ln\!\der{\wmn{\tin,ac}}{\qmn{\tin}}-\EXS{\wma{\rno}{\mQ}}{\ln\!\der{\wmn{\tin,ac}}{\qmn{\tin}}}}^{3}},
	\\
	\notag
	\htdelta
	&\!\DEF\! \tfrac{1}{e\sqrt{\amn{2}}}\left(\tfrac{1}{\sqrt{2\pi}}+2\tfrac{0.56\amn{3}}{\amn{2}}\right),
	\end{align}
	where \(\mW\!=\!\otimes_{\tin=1}^{\blx} \wmn{\tin}\) and \(\mQ\!=\!\otimes_{\tin=1}^{\blx} \qmn{\tin}\).
	Then for any \(\rno\!\in\!(1,\infty)\) and 
	\(\rnb\!\in\!\left[\tfrac{9\htdelta e^{\rno e\sqrt{2\pi e\amn{2}}\htdelta}}{\sqrt{\blx}}
	e^{-\rno\sqrt{\amn{2}\blx}},
	\tfrac{9\htdelta}{\sqrt{\blx}} e^{\rno\sqrt{\amn{2}\blx}}\right]\),
	there exists an \(\oev\!\in\!\outA_{1}^{\blx}\) such that
	\begin{subequations}
		\label{eq:lem:SC-HTBE-achievability}
		\begin{align}
		\label{eq:lem:SC-HTBE-achievability-q}
		\mQ(\oev)
		&\!\leq\!\rnb e^{-\RD{1}{\wma{\rno}{\mQ}}{\mQ}},	
		\\
		\notag
\hspace{-.2cm}
		\mW({\outS_{1}^{\blx}}\setminus\oev)
		&\!\leq\! \left(\prod\limits_{\tin=1}^{\blx}\lon{\wmn{\tin,ac}}\right)\\
		\label{eq:lem:SC-HTBE-achievability-w}
		&~~~~~~~~~
		-\tfrac{e^{(1-\rno)e\sqrt{2\pi e\amn{2}}\htdelta}}{\sqrt{2\pi\amn{2}}}
		(\tfrac{\rnb}{9\htdelta})^{\frac{\rno-1}{\rno}}
		\tfrac{e^{-\RD{1}{\wma{\rno}{\mQ}}{\mW}}}{\blx^{\sfrac{1}{2\rno}}}
		\end{align}
	\end{subequations}		
\end{lemma}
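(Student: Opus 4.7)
The plan is to exhibit a single slice of the log-likelihood ratio as the acceptance region, mirroring the dyadic decomposition used in the converse (Lemma \ref{lem:SC-HTBE-converse}). Concretely, I would take $\oev = \set{B}_\knd$ as defined in \eqref{eq:SC-HTBE-Bdef} for a threshold parameter $\tau$ and an integer $\knd$ chosen to place the slice within the typical range of $\cln{} - \EXS{\wma{\rno}{\mQ}}{\cln{}}$ under the tilted measure $\wma{\rno}{\mQ}$.

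On $\set{B}_\knd$, the same Radon-Nikodym identities \eqref{eq:CLQ} and \eqref{eq:CLW} that underlie the converse now yield \emph{reverse} envelopes
\[
\mQ(\set{B}_\knd) \leq \wma{\rno}{\mQ}(\set{B}_\knd)\, e^{-\RD{1}{\wma{\rno}{\mQ}}{\mQ} - \rno(\tau+\knd)},
\]
\[
\mW(\set{B}_\knd) \geq \wma{\rno}{\mQ}(\set{B}_\knd)\, e^{-\RD{1}{\wma{\rno}{\mQ}}{\mW} - (\rno-1)(\tau+\knd+1)}.
\]
The first of these converts the constraint \eqref{eq:lem:SC-HTBE-achievability-q} into an inequality of the form $e^{-\rno(\tau+\knd)}\wma{\rno}{\mQ}(\set{B}_\knd) \leq \rnb$; the second shows that saturating this constraint yields a lower bound on $\mW(\set{B}_\knd)$ proportional to $(\wma{\rno}{\mQ}(\set{B}_\knd))^{1/\rno}\rnb^{(\rno-1)/\rno}e^{-\RD{1}{\wma{\rno}{\mQ}}{\mW}}$, up to a bounded multiplicative factor arising from the unit-length rounding of $\tau+\knd$ to an integer.

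I would then choose the integer $\knd$ so that $e^{\rno(\tau+\knd)}$ is as close as possible from below to $\wma{\rno}{\mQ}(\set{B}_\knd)/\rnb$. The two-sided hypothesis on $\rnb$ in the statement is precisely what guarantees such a $\knd$ can be selected while keeping $\tau+\knd$ inside a window of width roughly $2\sqrt{\amn{2}\blx}$ centered at zero; the upper end of the range for $\rnb$ accommodates $\tau + \knd$ as large as $\sqrt{\amn{2}\blx}$, and the lower end accommodates $\tau + \knd$ as small as $-\sqrt{\amn{2}\blx}+\delta$, where $\delta \DEF e\sqrt{2\pi e \amn{2}}\htdelta$ is the safety margin needed on the lower side. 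Inside this window, the Berry-Esseen theorem gives the lower estimate
\[
\wma{\rno}{\mQ}(\set{B}_\knd) \geq \tfrac{e^{-1/2}}{\sqrt{2\pi\amn{2}\blx}} - \tfrac{1.12}{\sqrt{\blx}}\tfrac{\amn{3}}{\amn{2}\sqrt{\amn{2}}},
\]
and the numerical factor $9\htdelta$ appearing in the admissible interval for $\rnb$ is calibrated so that the Gaussian main term dominates the Berry-Esseen correction by exactly the margin required to yield the prefactor $\tfrac{1}{\sqrt{2\pi\amn{2}}}$ in \eqref{eq:lem:SC-HTBE-achievability-w}.

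Combining the $\mQ$-envelope, the $\mW$-envelope, and this Berry-Esseen lower bound for $\wma{\rno}{\mQ}(\set{B}_\knd)$ produces \eqref{eq:lem:SC-HTBE-achievability} directly, with the stated constants emerging once the integer rounding of $\knd$ is tracked carefully. I expect the main difficulty to be purely bookkeeping: propagating the safety margin $\delta$ through the rounding inside the $\mW$-bound is what generates the factor $e^{(1-\rno)e\sqrt{2\pi e\amn{2}}\htdelta}$ in \eqref{eq:lem:SC-HTBE-achievability-w}, and verifying that the specific numerical constant $9$ suffices for the Berry-Esseen domination requires slightly nontrivial arithmetic with the definition of $\htdelta$. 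No analytic tool beyond those already used in the proof of Lemma \ref{lem:SC-HTBE-converse} is required.
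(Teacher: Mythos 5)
Your plan of taking $\oev$ to be a single width-one slice $\set{B}_\knd$ from \eqref{eq:SC-HTBE-Bdef} has two concrete problems, both of which the paper's proof circumvents by construction. First, the Berry--Esseen lower estimate you write for a unit-width interval, $\wma{\rno}{\mQ}(\set{B}_\knd) \geq \tfrac{e^{-1/2}}{\sqrt{2\pi\amn{2}\blx}} - \tfrac{1.12\amn{3}}{\amn{2}\sqrt{\amn{2}\blx}}$, is not guaranteed to be positive: Lyapunov's inequality gives $\amn{3}\geq\amn{2}^{3/2}$, hence $\tfrac{\amn{3}}{\amn{2}}\geq\sqrt{\amn{2}}$, and whenever $\amn{2}$ exceeds roughly $0.047$ the right-hand side is vacuously negative. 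The factor $9\htdelta$ cannot repair this, because it only calibrates the admissible range of $\rnb$, not the slice width. The paper instead introduces slices $\set{B}_\knd^\delta$ of width $\delta = e\sqrt{2\pi e\amn{2}}\htdelta$ chosen precisely so that the Gaussian increment $\tfrac{\delta}{\sqrt{2\pi e\amn{2}\blx}}$ exceeds the Berry--Esseen correction $\tfrac{1.12\amn{3}}{\amn{2}\sqrt{\amn{2}\blx}}$ by exactly the margin $\tfrac{1}{\sqrt{2\pi\amn{2}\blx}}$ that appears as the prefactor in \eqref{eq:lem:SC-HTBE-achievability-w}.

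Second, no acceptance region $\oev\subset\set{B}_{\integers{}}$ can satisfy \eqref{eq:lem:SC-HTBE-achievability-w} when $\prod_{\tin=1}^{\blx}\lon{\wmn{\tin,ac}}\leq\tfrac{1}{2}$: since $\mW(\oev)\leq\mW(\set{B}_{\integers{}})=\prod_\tin\lon{\wmn{\tin,ac}}$ by \eqref{eq:SC-HTBE-BZ}, one has $\mW(\outS_1^\blx\setminus\oev)\geq 1-\prod_\tin\lon{\wmn{\tin,ac}}\geq\prod_\tin\lon{\wmn{\tin,ac}}$, leaving no room for the subtracted term. The paper's $\oev$ therefore adjoins the $\mQ$-null set $\left\{\der{\mW}{\rfm}>0 \text{ and } \der{\mQ}{\rfm}=0\right\}$, which leaves the $\mQ$-bound \eqref{eq:lem:SC-HTBE-achievability-q} untouched while capturing the singular $\mW$-mass $1-\prod_\tin\lon{\wmn{\tin,ac}}$. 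With these two repairs --- width-$\delta$ slices and the singular set --- your single-slice idea is sound; indeed it is marginally tighter on the $\mQ$ side than the paper's semi-infinite union $\cup_{\knd\in\integers{\geq 0}}\set{B}_\knd^\delta$, which incurs a geometric-series factor on the $\mQ$ estimate but uses only the $\knd=0$ slab for the $\mW$ lower bound; the constant $9$ is what emerges from summing that series.
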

\begin{proof}[Proof of Lemma \ref{lem:SC-HTBE-achievability}]
	Let the event \(\oev\) be 
	\begin{align}
	\notag
	\oev
	&=\left\{\dout_{1}^{\blx}:
	\der{\mW}{\rfm}>0 \mbox{~and~} \der{\mQ}{\rfm}=0\right\}
	\bigcup \cup_{\knd\in\integers{\geq0}} \set{B}_{\knd}^{\delta}
	\end{align}
	where \(\rfm\) is any reference measure satisfying \(\mW\AC\rfm\) and \(\mQ\AC\rfm\)
	and the event \(\set{B}_{\knd}^{\delta}\) is defined for each integer \(\knd\) as
\begin{align}
\label{eq:SC-HTBE-Bdef-delta}
\set{B}_{\knd}^{\delta}
&\!\DEF\! \left\{\dout_{1}^{\blx}\!:\!\tau+\knd{\delta}\leq \cln{}-\EXS{\wma{\rno}{\mQ}}{\cln{}}<\tau+(\knd+1){\delta} \right\}. 
\end{align}
Note that 
\(\set{B}_{\integers{}}=\sum_{\knd\in\integers{}}\set{B}_{k}^{\delta}\)
where \(\set{B}_{\integers{}}\DEF\sum_{\knd\in\integers{}}\set{B}_{k}\)
for \(\set{B}_{\knd}\) defined in \eqref{eq:SC-HTBE-Bdef},
which is equal to  \(\set{B}_{\knd}^{1}\).
Then
\begin{align}
\label{eq:SC-HTBE-AB1}
\mQ(\oev)
&=\sum\nolimits_{\knd\in\integers{\geq0}}\mQ(\set{B}_{\knd}^{\delta}),
\\
\mW(\outS_{1}^{\blx}\setminus\oev)
\label{eq:SC-HTBE-AB2}
&=\mW(\set{B}_{\integers{}})-\sum\nolimits_{\knd\in\integers{\geq0}}\mW(\set{B}_{\knd}^{\delta}).
\end{align}
On the other hand, as a result of \eqref{eq:CLQ} and \eqref{eq:CLW}, we have
	\begin{align}
	\label{eq:SC-HTBE-AB3}
	\mQ(\set{B}_{\knd}^{\delta})
	&\leq
	\wma{\rno}{\mQ}(\set{B}_{\knd}^{\delta}) e^{-\RD{1}{\wma{\rno}{\mQ}}{\mQ}-\rno\tau-\rno \knd\delta},
	\\
	\label{eq:SC-HTBE-AB4}
	\mW(\set{B}_{\knd}^{\delta})
	&\geq \wma{\rno}{\mQ}(\set{B}_{\knd}^{\delta})
	e^{-\RD{1}{\wma{\rno}{\mQ}}{\mW}+(1-\rno)\tau+(1-\rno)(\knd+1){\delta}}.
	\end{align}
Since \(\cln{\tin}\)'s are jointly independent under the probability measure \(\wma{\rno}{\mQ}\),
we can bound \(\wma{\rno}{\mQ}(\set{B}_{\knd}^{\delta})\)
from above  using the Berry-Esseen theorem
\cite{berry41,esseen42,shevtsova10}:
	\begin{align}
	\notag
	\wma{\rno}{\mQ}(\set{B}_{\knd}^{\delta})
	&\leq \GCD{\tfrac{\tau+(\knd+1)\delta}{\sqrt{\amn{2}\blx}}}-\GCD{\tfrac{\tau+\knd \delta}{\sqrt{\amn{2}\blx}}}
	+2\tfrac{0.56}{\sqrt{\blx}} 
	\tfrac{\amn{3}}{\amn{2}\sqrt{\amn{2}}}
	\\
	\notag 
	&\leq \tfrac{1}{\sqrt{\amn{2}\blx}}\left(\tfrac{\delta}{\sqrt{2\pi}}+2\tfrac{0.56 \amn{3}}{\amn{2}}\right).
	\end{align}
Thus using first \eqref{eq:SC-HTBE-AB1}, \eqref{eq:SC-HTBE-AB3}, 
	and the formula for the sum of geometric series we get
	\begin{align}
	\notag
	\mQ(\oev)
	&\leq e^{-\RD{1}{\wma{\rno}{\mQ}}{\mQ}-\rno\tau}
	\sum\nolimits_{\knd\in\integers{\geq0}} \wma{\rno}{\mQ}(\set{B}_{\knd}^{\delta}) e^{-\rno\knd \delta}
	\\
	\label{eq:SC-HTBE-AB5}
	&\leq e^{-\RD{1}{\wma{\rno}{\mQ}}{\mQ}-\rno \tau}
	\tfrac{1}{\sqrt{\blx}}\left(\tfrac{\delta}{\sqrt{2\pi \amn{2}}}+2\tfrac{0.56 \amn{3}}{\amn{2}\sqrt{\amn{2}}}\right)
	\tfrac{1}{1-e^{-\rno\delta}}.
	\end{align}
Then for any \(\delta>0\), 
\eqref{eq:lem:SC-HTBE-achievability-q} holds for 
small enough \(\tau\). 
We choose the value of \(\delta\) considering the
constraint given in \eqref{eq:lem:SC-HTBE-achievability-w}.
Since \(\cln{\tin}\)'s are jointly independent under the probability measure \(\wma{\rno}{\mQ}\)
	we can bound \(\wma{\rno}{\mQ}(\set{B}_{\knd})\)
	from below using the Berry-Esseen theorem, \cite{berry41,esseen42,shevtsova10}. If \(\knd\) 
	satisfies 
	both
	\(-\sqrt{\amn{2}\blx}\leq \tau+\knd \delta\)
	and
	\(\tau+(\knd+1) \delta\leq \sqrt{\amn{2}\blx}\) 
	then
	\begin{align}
	\notag
	\wma{\rno}{\mQ}(\set{B}_{\knd})
	&\geq 
	\GCD{\tfrac{\tau+(\knd+1)\delta}{\sqrt{\amn{2}\blx}}}-\GCD{\tfrac{\tau+\knd \delta}{\sqrt{\amn{2}\blx}}}
	-2\tfrac{0.56}{\sqrt{\blx}} 
	\tfrac{\amn{3}}{\amn{2}\sqrt{\amn{2}}}
	\\
	\notag
	&=\tfrac{1}{\sqrt{2\pi}} \int_{\tfrac{\tau+\knd\delta}{\sqrt{\amn{2}\blx}}}^{\tfrac{\tau+(\knd+1)\delta}{\sqrt{\amn{2}\blx}}}
	e^{-\sfrac{\dsta^2}{2}} \dif{\dsta} -2\tfrac{0.56}{\sqrt{\blx}} \tfrac{\amn{3}}{\amn{2}\sqrt{\amn{2}}}
	\\
	\notag
	&\geq \tfrac{e^{-\frac{1}{2}}}{\sqrt{2\pi}} 
	\tfrac{\delta}{\sqrt{\amn{2}\blx}}
	-2\tfrac{0.56}{\sqrt{\blx}} \tfrac{\amn{3}}{\amn{2}\sqrt{\amn{2}}}
	\\
	\notag
	&=\tfrac{1}{\sqrt{\amn{2}\blx}}\left(\tfrac{\delta}{\sqrt{2\pi e}} -2 \tfrac{0.56\amn{3}}{\amn{2}}\right).
	\end{align}
	If \(\delta\!=\!e\sqrt{2\pi e\amn{2}}\htdelta\),
	then 
	for all \(\knd\) satisfying both
	\(\knd\geq \tfrac{-\sqrt{\amn{2}\blx}-\tau}{\delta} \)
	and
	\(\knd \leq \tfrac{\sqrt{\amn{2}\blx}-\tau-\delta}{\delta}\), 
	we have
	\begin{align}
	\label{eq:SC-HTBE-AB6}
	\wma{\rno}{\mQ}(\set{B}_{\knd})
	&\geq \tfrac{1}{\sqrt{2\pi\amn{2}\blx}}.
	\end{align}
Furthermore, \(\delta\geq \sqrt{e}\) for \(\delta\!=\!e\sqrt{2\pi e\amn{2}}\htdelta\),
and \eqref{eq:SC-HTBE-AB5} imply
	\begin{align}
	\notag
	\mQ(\oev)
&\leq e^{-\RD{1}{\wma{\rno}{\mQ}}{\mQ}-\rno \tau}
\tfrac{\htdelta}{\sqrt{\blx}}\left(e\sqrt{e}+e\right)
\tfrac{1}{1-e^{-\sqrt{e}}}
\\
\notag
&\leq e^{-\RD{1}{\wma{\rno}{\mQ}}{\mQ}-\rno \tau}
\tfrac{\htdelta}{\sqrt{\blx}}9
\end{align}
Then \(\oev\) satisfies \eqref{eq:lem:SC-HTBE-achievability-q} for
\(\tau=\tfrac{1}{\rno}\ln\tfrac{9\htdelta}{\rnb\sqrt{\blx}}\).
Furthermore,
the hypothesis of \eqref{eq:SC-HTBE-AB6} is satisfied 
for \(\knd=0\) when \(\tau=\tfrac{1}{\rno}\ln\tfrac{9\htdelta}{\rnb\sqrt{\blx}}\),
because \(\rnb\!\in\!\left[\tfrac{9\htdelta e^{\rno \delta}}{\sqrt{\blx}}
e^{-\rno\sqrt{\amn{2}\blx}},
\tfrac{9\htdelta}{\sqrt{\blx}} e^{\rno\sqrt{\amn{2}\blx}}\right]\),
by the hypothesis of the lemma. 
Thus \eqref{eq:SC-HTBE-AB4} and \eqref{eq:SC-HTBE-AB6} imply
	\begin{align}
	\notag
	\mW(\set{B}_{0})
	&\geq \tfrac{1}{\sqrt{2\pi\amn{2}\blx}}
	\left(\tfrac{9 \htdelta}{\rnb \sqrt{\blx}}\right)^{\frac{1-\rno}{\rno}}
	e^{-\RD{1}{\wma{\rno}{\mQ}}{\mW}+(1-\rno)e\sqrt{2\pi e\amn{2}}\htdelta}
	.
	\end{align}
	Then \(\oev\) satisfies \eqref{eq:lem:SC-HTBE-achievability-w}
	as a result of \eqref{eq:SC-HTBE-BZ}  and \eqref{eq:SC-HTBE-AB2}.
\end{proof}

\IEEEtriggeratref{26}
\bibliographystyle{unsrt}
\newcommand{\noopsort}[1]{} \newcommand{\printfirst}[2]{#1}
  \newcommand{\singleletter}[1]{#1} \newcommand{\switchargs}[2]{#2#1}

\end{document}